\newtheorem{theorem}{\indent Theorem}
\newtheorem{lemma}[theorem]{\indent Lemma}
\newtheorem{corollary}[theorem]{\indent Corollary}
\newtheorem{proposition}[theorem]{\indent Proposition}
\newtheorem{remark}[theorem]{\indent Remark}
\newtheorem{assumption}[theorem]{\indent Assumption}
\newenvironment{proof}{\par{\itshape Proof}.\ }
\begin{document}
\begin{frontmatter}

\title{Fault-tolerant $H^\infty$ control for optical parametric oscillators with pumping fluctuations} 
\thanks[support]{This work was supported by the Australian Research Council’s Discovery Projects Funding Scheme under Project DP190101566 and Project DP180101805, the Air Force Office of Scientific Research under Agreement FA2386-16-1-4065, the Centres of Excellence under Grant CE170100012, the Alexander von Humboldt Foundation of Germany, and the U. S. Office of Naval Research Global under Grant N62909-19-1-2129.}
\thanks[cor]{Corresponding Author}

\author[1,2]{Yanan Liu}\ead{yaananliu@gmail.com},    
\author[1]{Daoyi Dong}\ead{daoyidong@gmail.com},               
\author[3]{Ian R. Petersen}\ead{i.r.petersen@gmail.com},  
\author[1,2,cor]{Hidehiro Yonezaw}\ead{h.yonezawa@unsw.edu.au}

\address[1]{School of Engineering and Information Technology, University of New South Wales, ACT 2600, Australia}  
\address[2]{Centre for Quantum Computation and Communication Technology, Australian Research Council, ACT 2600, Australia} 
\address[3]{Research School of Electrical, Energy and Materials Engineering, The Australian National University, ACT 0200, Australia}        

\begin{keyword}                           
Coherent $H^\infty$ control, quantum optics, Riccati equations, fault-tolerant quantum control, quantum controller.           
\end{keyword}

\begin{abstract}                          
Optical Parametric Oscillators (OPOs) have wide applications in quantum optics for generating squeezed states and developing advanced technologies. When the phase or/and the amplitude of the pumping field for an OPO have fluctuations due to fault signals, time-varying uncertainties will be introduced in the dynamic parameters of the system. In this paper, we investigate how to design a fault-tolerant $H^\infty$ controller for an OPO with a disturbance input and time-varying uncertainties, which can achieve the required $H^\infty$ performance of the quantum system. We apply robust $H^\infty$ control theory to a quantum system, and design a passive controller and an active controller based on the solutions to two Riccati equations. The passive controller has a simple structure and is easy to be implemented by using only passive optical components, while the active quantum controller may achieve improved performance. The control performance of the proposed two controllers and one controller that was designed without consideration of system uncertainties is compared by numerical simulations in a specific OPO, and the results show that the designed controllers work effectively for fluctuations in both the phase and amplitude of the pumping field.
\end{abstract}

\end{frontmatter}

\section{Introduction}
Quantum control theory mainly focuses on developing effective and reliable control methods for quantum systems, and plays a fundamental role in many quantum technologies \cite{altafini2012modeling,xiang2017coherent,guo2018optimal,gao2016fault,4797787,li2006control,ge2020robust,dong2019learning}. Quantum systems inevitably suffer from fault processes in many practical applications, which reduces the performance of systems and even causes the failure of the system \cite{blanke2006diagnosis}. For example, the amplitude and phase of the pumping field in an Optical Parametric Oscillator (OPO) may be subject to fluctuations due to unstable laser power. Fault-tolerant control theory aims to detect the faults and design a controller to make the system fault tolerant \cite{ding2008model}, which has been extended to the quantum domain \cite{liu2020fault,wang2016fault}.

An OPO is essentially composed of an optical resonator and a nonlinear optical crystal, and has been widely used in quantum optical experiments to generate squeezed states \cite{takeno2007observation,serikawa2016creation,suzuki20067}. The dynamics of an OPO with an ideal pumping field, by which we mean a laser with fixed amplitude and phase, can be described by a time-invariant linear differential equation \cite{liu2020coherent,bachor2019guide}. However, when the phase or/and the amplitude of the pumping field are subject to fluctuations due to fault processes, they will introduce time-varying uncertainties in the system parameters \cite{gao2016fault}. In this work, we aim to design a feedback controller to stabilise an OPO with a disturbance input and parameter uncertainties. Since OPOs have been widely used to generate squeezed light in quantum optics, the proposed robust $H^\infty$ control theory can improve the robustness and reliability of the OPOs, thus having the potential to enhance the performance of quantum technologies such as quantum sensing, computation and communication where squeezed states are utilized.

Compared with measurement-based feedback control, coherent feedback control can avoid quantum measurement and feedback delay \cite{wiseman2009quantum,altafini2012modeling,dong2010quantum,liu2016lyapunov,liu2019feedback,yamamoto2007feedback,iida2012experimental,zhang2010direct,yamamoto2008avoiding}. Many feedback control methods including LQG control and $H^\infty$ control have been used to design coherent feedback controllers and to improve the performance of quantum optical systems \cite{james2008h,nurdin2009coherent,maalouf2010coherent,xiang2017coherent}. For example, the design of $H^\infty$ quantum controllers for a time-invariant linear quantum system was proposed in \cite{james2008h}. However, when there is parameter uncertainty in the plant modeling, the standard $H^\infty$ theory cannot provide guaranteed $H^\infty$ performance as well as stability of the closed-loop systems \cite{xu2002robust}. This has motivated the study of robust $H^\infty$ control problems. The objective of robust $H^\infty$ control theory is to obtain a controller such that $H^\infty$ norm of an input-output operator is no larger than a prescribed level for all admissible uncertainties. In this work, the robust $H^\infty$ control theory will be applied to an OPO system. 

The rest of this paper is organized as follows. Section \ref{sec2} presents the system model and the problem formulation. In Section \ref{sec3}, a passive quantum controller is designed by applying $H^\infty$ control theory to the OPO with phase fluctuations, where the controller is implemented by an empty cavity. In Section \ref{sec4}, an active quantum controller is designed based on another decomposition of the system uncertainties. In Section \ref{sec5}, we analyse the effectiveness of the two proposed controllers in an OPO with fluctuations in both the amplitude and phase of the pumping field. Section \ref{sec:conclusion} concludes this paper.

\section{System Models and Problem Formulation}
\label{sec2}
Continuous wave (Cw)-pumped OPOs have been widely used to obtain squeezed light in many experiments \cite{takeno2007observation,serikawa2016creation}. A pumping laser is often assumed to have fixed amplitude and phase to generate squeezed states. However, the amplitude and phase may be affected by classical fault processes. For example, the laser output power may drift for a long time operation \cite{gao2016fault}. In this section, we consider an OPO composed of three mirrors, where the phase of the pumping field is subject to fluctuations. The structure of the OPO is shown in Fig. \ref{fig:plant}.

\begin{figure}
\includegraphics[width=0.5\textwidth]{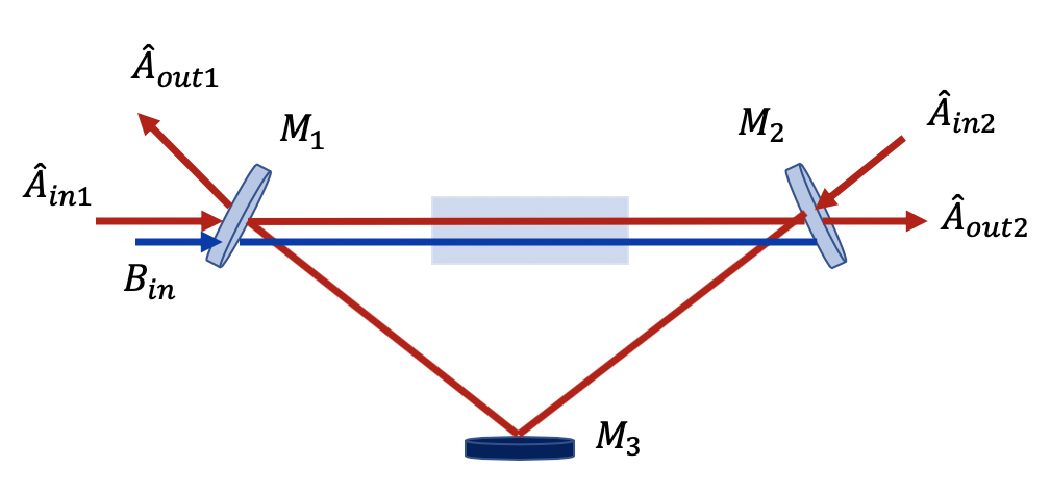}
\caption{Structure of an OPO, composed of three mirrors $M_1$, $M_2$, $M_3$, and a nonlinear crystal inside the cavity. $M_1$ and $M_2$ are partially transmissive for the fundamental field $\hat{A}_{in1}$ and $\hat{A}_{in2}$, and fully transmissive for pumping field $B_{in}$, and $M_3$ is fully reflective for both fields.\label{fig:plant}}
\end{figure}
Let $\hat{a}(t)$ and $\hat{a}^{\dagger}(t)$ denote the annihilation and creation operators of the fundamental field inside the cavity. Then their dynamical equations can be written as follows:
\begin{equation}
\label{eqn:differentialofannihilation}
\begin{aligned}
d\hat{a}(t)&=\left[-\frac{\kappa}{2}\hat{a}(t)+\chi^{(2)}\beta e^{\mbox{i}[\phi+\Delta\phi(t)]}\hat{a}^{\dagger}(t)\right]dt\\
&+\sqrt{\kappa_1}d\hat{A}_{in1}(t)+\sqrt{\kappa_2}d\hat{A}_{in2}(t),\\
d\hat{a}^{\dagger}(t)&=\left[-\frac{\kappa}{2}\hat{a}^{\dagger}(t)+\chi^{(2)}\beta e^{-\mbox{i}[\phi+\Delta\phi(t)]}\hat{a}(t)\right]dt\\
&+\sqrt{\kappa_1}d\hat{A}_{in1}^{\dagger}(t)+\sqrt{\kappa_2}d\hat{A}_{in2}^{\dagger}(t).
\end{aligned}
\end{equation}
Here, $\chi^{(2)}$ is the nonlinearity of the crystal; $\mbox{i}=\sqrt{-1}$; the pumping field $B_{in}$ in Fig. \ref{fig:plant} is represented by a complex number $\beta e^{\mbox{i}[\phi+\Delta\phi(t)]}$, where $\beta\in\mathbb{R}$ is the amplitude and $\phi+\Delta \phi(t)$ is the phase with a fluctuation $\Delta \phi(t)$ caused by a classical fault process. Without loss of generality, we let $\phi=0$ and consider the effect of phase fluctuation $\Delta \phi(t)$; $\kappa=\kappa_1+\kappa_2$, where $\kappa_1$ and $\kappa_2$ are the decay rates for mirrors $M_1$ and $M_2$; $\kappa_i=\frac{T_ic}{l}$, where $T_i$ is the power transmissivity of the mirror $M_i$, $c$ is the speed of light, and $l$ is the optical length of the cavity. $\hat{A}_{in1}$ and $\hat{A}_{in2}$ are two input fields.

The equations of two output fields $\hat{A}_{out1}$ and $\hat{A}_{out2}$ are:
\begin{equation}
\label{eqn:outputfields}
\begin{aligned}
\hat{A}_{out1}(t)&=\sqrt{\kappa_1}\hat{a}(t)-\hat{A}_{in1}(t),\\
\hat{A}_{out2}(t)&=\sqrt{\kappa_2}\hat{a}(t)-\hat{A}_{in2}(t).
\end{aligned}
\end{equation}
Since the annihilation and creation operators are not self-adjoint, we consider the amplitude and phase quadrature of the fundamental field as the system variables: $$x=\begin{bmatrix}
\hat{a}(t)+\hat{a}^{\dagger}(t)\\
-\mbox{i}(\hat{a}(t)-\hat{a}^{\dagger}(t))
\end{bmatrix}.$$
We denote
\[
\begin{aligned}
\omega(t)&=\left[\begin{smallmatrix}
\hat{A}_{in1}(t)+\hat{A}_{in1}^{\dagger}(t)\\
-\mbox{i}(\hat{A}_{in1}(t)-\hat{A}_{in1}^{\dagger}(t))
\end{smallmatrix}\right],u(t)=\left[\begin{smallmatrix}
\hat{A}_{in2}(t)+\hat{A}_{in2}^{\dagger}(t)\\
-\mbox{i}(\hat{A}_{in2}(t)-\hat{A}_{in2}^{\dagger}(t))
\end{smallmatrix}\right],\\
z(t)&=\left[\begin{smallmatrix}
\hat{A}_{out2}(t)+\hat{A}_{out2}^{\dagger}(t)\\
-\mbox{i}(\hat{A}_{out2}(t)-\hat{A}_{out2}^{\dagger}(t))
\end{smallmatrix}\right],y(t)=\left[\begin{smallmatrix}
\hat{A}_{out1}(t)+\hat{A}_{out1}^{\dagger}(t)\\
-\mbox{i}(\hat{A}_{out1}(t)-\hat{A}_{out1}^{\dagger}(t))
\end{smallmatrix}\right].
\end{aligned}
\]

Then, the system equations can be written as follows:
\begin{equation}
\label{eqn:differentialofquadratures}
\begin{aligned}
dx(t)&=[A+\Delta A(t)]x(t)dt+B_1du(t)+B_2d\omega(t),\\
z(t)&=C_1 x(t)+D_1u(t),\\
y(t)&=C_2 x(t)+D_2\omega(t),
\end{aligned}
\end{equation}
where
\begin{equation}
\label{eqn:systemparameters}
\begin{aligned}
& A+\Delta A(t)=\left[\begin{smallmatrix}
-\frac{\kappa}{2} & 0\\
0 & -\frac{\kappa}{2}
\end{smallmatrix}\right]+\chi^{(2)}\beta \left[\begin{smallmatrix}\cos (\Delta \phi(t))&\sin (\Delta \phi(t))\\\sin (\Delta \phi(t))&-\cos (\Delta \phi(t)) \end{smallmatrix}\right],\\
& B_1=\sqrt{\kappa_{2}}\mbox{I}, B_2=\sqrt{\kappa_{1}}\mbox{I},\\
& C_1=\sqrt{\kappa_{2}}\mbox{I}, D_1=-\mbox{I},\\
& C_2=\sqrt{\kappa_1}\mbox{I}, D_2=-\mbox{I}.
\end{aligned}
\end{equation}

Here, $\mbox{I}$ is the identity matrix with an appropriate dimension. The initial system variables $x(0)$ usually consist of quadrature operators satisfying the commutation relations \cite{james2008h}:
\begin{equation}
[x_j(0), x_k(0)]=2\mbox{i}\Theta_{jk}, j, k=1, \cdots, n,
\end{equation}
where $\Theta$ is a real anti-symmetric matrix with components $\Theta_{jk}$ and $n$ is the system dimension. For simplicity, we consider a canonical form: $\Theta={\rm diag}(J, J, \cdots, J)$, where $J$ denotes a real skew-symmetric $2\times 2$ matrix $J=\left[\begin{smallmatrix}0&1\\-1&0\end{smallmatrix}\right]$.
In our case, we have $\Theta=J$ for $n=2$. Also, $u(t)$ is the control signal to be designed. In addition, $d\omega(t)=\beta_\omega(t)dt+d\tilde{\omega}(t)$ represents a disturbance input, where $\beta_\omega(t)$ is a self adjoint process and $\tilde{\omega}(t)$ is the quantum noise part. Furthermore, $y(t)$ is the output that is fed back to the controller and $z(t)$ is the controlled output. Also, $\Delta A(t)$ is a real-valued matrix function representing time-varying parameter uncertainties. 

We first give the following lemmas to illustrate the stable and quadratically stable systems with disturbance attenuation $\gamma$ for systems described by linear differential equations \eqref{eqn:differentialofquadratures}.
\begin{lemma}{\cite{xie1992h}}
Given a scalar $\gamma>0$, the following system
\begin{equation}
\begin{aligned}
dx(t)&=Ax(t)dt+B_1d\omega(t),\\
z(t)&=C_1x(t).\\
\end{aligned}
\end{equation}
is stable with disturbance attenuation $\gamma$ if it satisfies the following conditions:
\begin{itemize}
\item [i)] $A$ is a stable matrix; 
\item [ii)] the transfer function from disturbance $\omega$ to controlled output $z$ satisfies
\begin{equation*}
\|C_1(s\mbox{I}-A)^{-1}B_1\|_{\infty}<\gamma.
\end{equation*}
\end{itemize}
\end{lemma}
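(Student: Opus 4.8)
The plan is to show that conditions (i) and (ii) are exactly the bounded real lemma characterization of the statement that the system is stable with disturbance attenuation $\gamma$, so the proof is essentially a matter of recalling the definition and invoking the bounded real lemma (see \cite{xie1992h}). First I would recall the definition: the system $dx=Ax\,dt+B_1\,d\omega$, $z=C_1x$ is said to be \emph{stable with disturbance attenuation $\gamma$} if $A$ is Hurwitz (so that with $\omega\equiv 0$ the state decays to zero from any initial condition) and, with $x(0)=0$, the $L_2$-induced gain from $\omega$ to $z$ is strictly less than $\gamma$, i.e. $\int_0^\infty z(t)^\top z(t)\,dt<\gamma^2\int_0^\infty \omega(t)^\top\omega(t)\,dt$ for all nonzero $\omega\in L_2[0,\infty)$.

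Next I would identify the frequency-domain reformulation. Since $A$ is stable, the system is well defined on $L_2$ and its transfer function is $G(s)=C_1(sI-A)^{-1}B_1$, which is a proper stable rational matrix. By Parseval's theorem, the $L_2$-induced gain from $\omega$ to $z$ equals the $H^\infty$ norm $\|G\|_\infty=\sup_{\omega\in\mathbb{R}}\sigma_{\max}\!\big(G(j\omega)\big)$. Hence the induced-gain condition in the definition is equivalent to $\|C_1(sI-A)^{-1}B_1\|_\infty<\gamma$, which is precisely condition (ii). Combining this with condition (i), the two bullet points together are equivalent to the system being stable with disturbance attenuation $\gamma$; in particular they are sufficient, which is all the lemma asserts. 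I would then remark that this equivalence is the content of the bounded real lemma, and cite \cite{xie1992h} for the detailed argument, which also gives the state-space (Riccati/LMI) version that will be used later in the paper.

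The only genuine subtlety — and the step I would flag as the main obstacle if one insisted on a self-contained argument — is the passage from the time-domain induced-gain inequality to the frequency-domain $H^\infty$ bound, together with the role of the initial condition: one must check that nonzero $x(0)$ does not spoil the bound, which is handled by the exponential stability from (i) (the zero-input response is in $L_2$ and contributes a vanishing boundary term in the dissipation argument), and one must be careful that the strict inequality in the sup over frequencies corresponds to the strict inequality in the induced gain, which again uses stability to rule out a supremum attained only in the limit $\omega\to\infty$ without being exceeded. Since \cite{xie1992h} establishes exactly this, I would not reproduce the computation and would simply invoke it.
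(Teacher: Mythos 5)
Your proposal is correct and matches the paper's treatment: the paper states this lemma without proof, quoting \cite{xie1992h}, where ``stable with disturbance attenuation $\gamma$'' is essentially defined by (i) together with the $H^\infty$ bound, so your recollection of the definition plus the standard Parseval/bounded-real-lemma equivalence and the deferral to \cite{xie1992h} is exactly what is intended. No gap to report.
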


\begin{lemma}{\cite{xie1992h}}
Given a $\gamma>0$, and a system with parameter uncertainty $\Delta A(t)$ in the state matrix as:
\begin{equation}
\label{eqn:syswithuncertaintyinA}
\begin{aligned}
dx(t)&=[A+\Delta A(t)]x(t)dt+B_1d\omega(t),\\
z(t)&=C_1x(t).\\
\end{aligned}
\end{equation}
This uncertain system is quadratically stable with disturbance attenuation $\gamma$ if there exists a positive definite symmetric matrix $P$ such that for all admissible uncertainties $\Delta A(t)$
\begin{equation}
\begin{aligned}
\left[A+\Delta A(t)\right]^T P+P[A+\Delta A(t)]&+\gamma^{-2}PB_1B_1^TP\\
&+C_1^TC_1<0.
\end{aligned}
\end{equation}
\end{lemma}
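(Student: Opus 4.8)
The plan is to use a single quadratic storage function $V(x)=x^{T}Px$, built from the matrix $P>0$ in the hypothesis, to certify both asserted properties --- quadratic stability and disturbance attenuation $\gamma$ --- \emph{uniformly} over all admissible uncertainties $\Delta A(t)$.

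First I would dispose of quadratic stability. Setting the disturbance to zero ($d\omega\equiv 0$) and differentiating $V$ along the trajectories of \eqref{eqn:syswithuncertaintyinA} gives $\dot V=x^{T}\big([A+\Delta A(t)]^{T}P+P[A+\Delta A(t)]\big)x$. Since $\gamma^{-2}PB_{1}B_{1}^{T}P\ge 0$ and $C_{1}^{T}C_{1}\ge 0$, the inequality in the statement forces $[A+\Delta A(t)]^{T}P+P[A+\Delta A(t)]<0$ for every admissible $\Delta A(t)$. Because the admissible uncertainty set is norm-bounded, its closure is compact and the map $\Delta A\mapsto[A+\Delta A]^{T}P+P[A+\Delta A]$ is continuous, so there is a uniform $\varepsilon>0$ with $[A+\Delta A]^{T}P+P[A+\Delta A]\le-\varepsilon\mbox{I}$. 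Hence $\dot V\le-\varepsilon\|x\|^{2}\le-\big(\varepsilon/\lambda_{\max}(P)\big)V$, which gives a uniform exponential bound on $x(t)$, i.e.\ quadratic stability.

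For the attenuation level I would examine the dissipation functional $\dot V+z^{T}z-\gamma^{2}\omega^{T}\omega$, where $\omega$ denotes the classical part $\beta_{\omega}$ of $d\omega$. Substituting $\dot V=x^{T}\big([A+\Delta A]^{T}P+P[A+\Delta A]\big)x+2x^{T}PB_{1}\omega$ and $z^{T}z=x^{T}C_{1}^{T}C_{1}x$ and completing the square in $\omega$ yields
\begin{equation*}
\begin{aligned}
&\dot V+z^{T}z-\gamma^{2}\omega^{T}\omega\\
&\quad=x^{T}\Pi_{\Delta A}\,x-\gamma^{2}\big\|\omega-\gamma^{-2}B_{1}^{T}Px\big\|^{2}\\
&\quad\le x^{T}\Pi_{\Delta A}\,x\le-\varepsilon'\|x\|^{2}\le 0,
\end{aligned}
\end{equation*}
where $\Pi_{\Delta A}:=[A+\Delta A]^{T}P+P[A+\Delta A]+\gamma^{-2}PB_{1}B_{1}^{T}P+C_{1}^{T}C_{1}$ is exactly the matrix assumed negative definite and $\varepsilon'>0$ is again a uniform margin obtained by compactness. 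Integrating over $[0,T]$ with zero initial state $x(0)=0$ and using $V(x(T))\ge 0$ then gives $\int_{0}^{T}z^{T}z\,dt\le\gamma^{2}\int_{0}^{T}\omega^{T}\omega\,dt$ for every admissible uncertainty, which is the claimed disturbance attenuation $\gamma$. Equivalently, for each frozen admissible $\Delta A$ the relation $\Pi_{\Delta A}<0$ with $P>0$ is the strict bounded-real condition, so Lemma~1 applies to the corresponding time-invariant system; the essential point is that a \emph{common} $P$ serves for all $\Delta A$, which is what promotes this to a quadratic stability/attenuation statement.

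The step I expect to be the real obstacle is not the algebra but the passage from the pointwise strict inequality ``for all admissible $\Delta A(t)$'' to the single uniform margin $\varepsilon$ (and $\varepsilon'$) valid along the entire trajectory; this is precisely what makes the Lyapunov certificate robust rather than merely nominal, and it relies on the admissible set being compact --- e.g.\ a norm-bounded description $\Delta A(t)=HF(t)E$ with $F(t)^{T}F(t)\le\mbox{I}$ as used in \cite{xie1992h}; note that the phase-fluctuation matrix $\Delta A(t)$ in \eqref{eqn:systemparameters} is norm-bounded over all $\Delta\phi(t)$ and so fits this framework. A minor additional check arises when the lemma is later applied to the quantum plant \eqref{eqn:differentialofquadratures}: $V(x(t))$ must then be read through its generator, and the It\^o correction from the quantum-noise part $d\tilde{\omega}$ of $d\omega$ contributes only a state-independent term, absorbed into the performance bound as in \cite{james2008h}.
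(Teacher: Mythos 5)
Your argument is essentially correct, but note that the paper itself offers no proof of this lemma: it is quoted verbatim from \cite{xie1992h}, where the existence of a single positive definite $P$ satisfying exactly this matrix inequality for all admissible $\Delta A(t)$ is taken as the \emph{definition} of quadratic stability with disturbance attenuation $\gamma$ (the paper then uses it only as a benchmark for Theorem~\ref{the:Riccati}). What you have written is therefore not an alternative to the paper's route but a self-contained justification that this certificate delivers the time-domain properties one expects, via the standard common-storage-function (bounded-real/dissipation) argument: dropping the two positive semidefinite terms gives the robust Lyapunov inequality and hence uniform exponential stability; completing the square in $\omega$ gives $\dot V+z^{T}z-\gamma^{2}\omega^{T}\omega\le x^{T}\Pi_{\Delta A}x\le 0$, and integration from $x(0)=0$ gives the $L_{2}$-gain bound uniformly in the uncertainty. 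Your algebra in the completion of squares is right, and your identification of the uniform margin as the key point is sound --- for the OPO the admissible set is $\Delta A=H_{1}F(t)E_{1}$ with $F^{T}F\le\rho^{2}\mbox{I}$, a compact set in finite dimension, so the margin $\varepsilon$ exists. Two small refinements: first, your integration yields $\int z^{T}z\,dt\le\gamma^{2}\int\omega^{T}\omega\,dt$, i.e.\ gain at most $\gamma$; to match the strict attenuation in Lemma~1 you should use the uniform margin once more (e.g.\ the strict inequality still holds with $\gamma$ replaced by some $\gamma'<\gamma$, or keep the $-\varepsilon'\int\|x\|^{2}dt$ term and argue strictness of the induced norm), which your margin argument already supports. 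Second, the closing remark about the It\^o correction from $d\tilde\omega$ is beyond the scope of the lemma as stated (it is a classical deterministic statement about $\beta_{\omega}$), but it is harmless and consistent with how the paper applies such results following \cite{james2008h}.
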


Furthermore, the uncertain system \eqref{eqn:differentialofquadratures} is said to be quadratically stabilisable with disturbance attenuation $\gamma$ via a linear dynamical controller if there exists a linear dynamical feedback compensator $K(s)$ such that with $u=K(s)y$ the resulting closed-loop system is quadratically stable with disturbance attenuation $\gamma$.

The fluctuations of the pumping phase in an OPO will reduce its performance, and may even make the system unstable. Hence, the control objective in this paper is to design an $H^\infty$ feedback controller such that the closed-loop system corresponding to the system \eqref{eqn:differentialofquadratures} and the designed controller is quadratically stable and achieves a given level of disturbance attenuation $\gamma$ for all norm-bounded uncertainties. In particular, we consider two schemes: one is to design a passive controller and the other is to employ an active controller.

\section{Passive Quantum Controller Design}
\label{sec3}
In this section, we design a passive quantum controller to achieve the given disturbance attenuation $\gamma$ for the OPO shown in Fig. \ref{fig:plant}. Here, a passive controller means that the controller is composed by only passive components like cavities, beam-splitters, and phase shifters \cite{petersen2011cascade}.

\subsection{Robust $H^\infty$ controller design}
\label{sec:controldesigntheory}
We first assume that the controller to be designed is described by the following dynamical equation:
\begin{equation}
\label{eqn:controller}
\begin{aligned}
dx_c(t)&=\mathcal{A}_cx_c(t)dt+\mathcal{B}_cdy(t),\\
u(t)&=\mathcal{C}_cx_c(t)dt.
\end{aligned}
\end{equation}
In coherent feedback control, the final controller may need to be modified based on \eqref{eqn:controller} by adding other additional parameters such that this dynamical equation represents a physical realizable quantum controller.

If the uncertainty $\Delta A(t)$ is norm-bounded and is of the form
\begin{equation}
\label{eqn:uncertaintydecompositionA}
\Delta A(t)=H_1F(t)E_1,
\end{equation}
where $H_1$ and $E_1$ are known constant matrices and $F(t)$ is an unknown matrix function satisfying
\begin{equation}
\label{eqn:normboundedcondition}
F^T(t)F(t)\leq \rho^2\mbox{I},
\end{equation}
with a constant $\rho>0$, referred to an uncertainty bound.

The following theorem solves the robust $H^\infty$ control design problem by solving two Riccati equations.
\begin{theorem}
\label{the:Riccati}
Let $\gamma$ be a required level of disturbance attenuation. The system \eqref{eqn:differentialofquadratures} is quadratically stable with $\gamma$ via the controller \eqref{eqn:controller} if and only if there exists a constant $\epsilon>0$ such that the Riccati equations \eqref{eqn:simRiccati_a} and \eqref{eqn:simRiccati_b} have stabilising solutions $X, Y$ to ensure that the spectral radius satisfies $\zeta(XY)<1$:
\begin{subequations}
\begin{align}
&\left(A-B_1G^{-1}D_1^TC_1\right)^TX+X\left(A-B_1G^{-1}D_1^TC_1\right)\label{eqn:simRiccati_a}\notag\\
&-X(B_1G^{-1}B_1^T-\epsilon\rho^2H_1H_1^T-\gamma^{-2}B_2B_2^T)X\notag\\
&+C_1^T(\mbox{I}-D_1G^{-1}D_1^T)C_1+\frac{1}{\epsilon}E_1^TE_1=0,\\
&(A-B_2D_2^T\Gamma^{-1}C_2)Y+Y(A-B_2D_2^T\Gamma^{-1}C_2)^T\label{eqn:simRiccati_b}\notag\\
&-Y(\gamma^2C_2^T\Gamma^{-1}C_2-\frac{1}{\epsilon}E_1^TE_1-C_1^TC_1)Y\notag\\
&+\gamma^{-2}B_2(I-D_2^T\Gamma^{-1}D_2)B_2^T+\epsilon\rho H_1H_1^T=0.
\end{align}
\end{subequations}
Using the solutions $X$ and $Y$, the controller is given by
\begin{equation}
\label{eqn:simcontrolparameters}
\begin{aligned}
\mathcal{A}_c&=A+B_1\mathcal{C}_c-\mathcal{B}_cC_2\\
&+[\epsilon \rho^2H_1H_1^T+\gamma^{-2}(B_2-\mathcal{B}_cD_2)B_2^T]X,\\
\mathcal{B}_c&=(\mbox{I}-YX)^{-1}(YC_2^T+\gamma^{-2}B_2D_2^T)\Gamma^{-1},\\
\mathcal{C}_c&=-G^{-1}(B_1X+D_1^TC_1).
\end{aligned}
\end{equation}
Here, $G=D_1^TD_1$, $\Gamma=D_2D_2^T$.
\end{theorem}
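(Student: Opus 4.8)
The plan is to convert this robust output-feedback synthesis problem into a standard (scaled) $H^\infty$ control problem for an auxiliary uncertainty-free plant, and then invoke the two-Riccati-equation characterization of $H^\infty$ output-feedback control, as in \cite{xie1992h}. Since only the $c$-number coefficient matrices enter the notions of stability and disturbance attenuation (the quantum noise part of $d\omega$ and the commutation relations are irrelevant here, just as in the lemmas stated above), the analysis is formally identical to the classical case. As a preliminary step I would check that the OPO plant \eqref{eqn:differentialofquadratures} satisfies the standing regularity assumptions of that theory: because $D_1=-I$ and $D_2=-I$ we have $G=D_1^TD_1=I>0$ and $\Gamma=D_2D_2^T=I>0$, and stabilizability of $(A,B_1)$ together with detectability of $(C_2,A)$ follow directly from \eqref{eqn:systemparameters}.

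The core of the argument is a scaling (S-procedure) lemma. Using the structure \eqref{eqn:uncertaintydecompositionA}--\eqref{eqn:normboundedcondition}, the term $H_1F(t)E_1x$ is treated as a fictitious feedback loop: introduce the artificial output $\zeta=E_1x$ and an artificial input acting through $H_1$, related by the ``gain'' $F(t)/\rho$, which has norm at most one. A completion-of-squares argument then shows that the closed loop formed by \eqref{eqn:differentialofquadratures} and \eqref{eqn:controller} is quadratically stable with disturbance attenuation $\gamma$ for all admissible $F(t)$ if and only if there is a scalar $\epsilon>0$ such that the auxiliary, uncertainty-free plant
\[
\begin{aligned}
dx &= Ax\,dt + B_1\,du + \bigl[\,\sqrt{\epsilon}\,\rho\,H_1 \ \ \gamma^{-1}B_2\,\bigr]\,d\tilde\omega,\\
\tilde z &= \begin{bmatrix}\epsilon^{-1/2}E_1\\ C_1\end{bmatrix}x+\begin{bmatrix}0\\ D_1\end{bmatrix}u,\qquad y=C_2x+D_2\omega,
\end{aligned}
\]
is internally stabilized by the same controller \eqref{eqn:controller} with (appropriately normalized) closed-loop $H^\infty$ norm from $\tilde\omega$ to $\tilde z$ below one. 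The ``if'' direction follows at once from the associated Riccati inequality; the ``only if'' direction -- the existence of such an $\epsilon$ -- is the delicate part, and is exactly where the result of \cite{xie1992h} is used.

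Applying the standard two-Riccati-equation solution of the $H^\infty$ output-feedback problem to this scaled plant yields the two coupled equations. Substituting the scaled data (so that the disturbance channel contributes $\epsilon\rho^2H_1H_1^T+\gamma^{-2}B_2B_2^T$ and the performance channel contributes $C_1^TC_1+\tfrac{1}{\epsilon}E_1^TE_1$, with $G=D_1^TD_1$ and $\Gamma=D_2D_2^T$ unchanged) reproduces the control Riccati equation \eqref{eqn:simRiccati_a} and the filter Riccati equation \eqref{eqn:simRiccati_b}; solvability then demands stabilizing solutions $X,Y\ge 0$ together with the coupling condition $\zeta(XY)<1$. Finally, inserting $X$ and $Y$ into the central, observer-based controller formulas of $H^\infty$ theory -- the state-feedback gain $\mathcal{C}_c=-G^{-1}(B_1^TX+D_1^TC_1)$, the filter gain pre-multiplied by $(I-YX)^{-1}$, and the corresponding estimator dynamics $\mathcal{A}_c$ -- and simplifying gives \eqref{eqn:simcontrolparameters}.

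I expect the main obstacle to be the necessity part of the scaling lemma, i.e.\ showing that quadratic stabilizability with attenuation $\gamma$ forces the existence of an admissible $\epsilon>0$; this is precisely the step that upgrades the statement from a sufficient condition to an equivalence. The remaining work -- matching the coefficients of the scaled Riccati equations to \eqref{eqn:simRiccati_a}--\eqref{eqn:simRiccati_b} and reducing the central controller to the form \eqref{eqn:simcontrolparameters} -- is routine but bookkeeping-heavy. Note that \eqref{eqn:controller} is at this stage a generic linear controller; augmenting it with quantum noise channels so that it becomes a physically realizable quantum system is carried out separately in the following sections and plays no role in the present proof.
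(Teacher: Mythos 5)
Your proposal follows essentially the same route as the paper: reduce the robust problem to a scaled, uncertainty-free auxiliary plant via the scaling equivalence of Xie--de Souza (Theorem 3.1 of the cited reference, which supplies exactly the ``only if'' direction you flag as delicate), and then apply the standard two-Riccati scaled $H^\infty$ output-feedback result to obtain \eqref{eqn:simRiccati_a}--\eqref{eqn:simRiccati_b} and the central controller \eqref{eqn:simcontrolparameters}. The only slip is cosmetic: in the auxiliary plant the measured output should carry the scaled disturbance channel $\gamma^{-1}D_2\,\omega'$ (as in \eqref{eqn:detersystem}) rather than the original $D_2\omega$, which is needed for the $\gamma^{-2}B_2D_2^T$ and $\gamma^2C_2^T\Gamma^{-1}C_2$ terms in \eqref{eqn:simRiccati_b}, but this does not change the argument.
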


\begin{proof}
For a system with time-varying uncertainties, which is described as:
\begin{equation}
\label{eqn:syswithgeneraluncertainty}
\begin{aligned}
dx(t)&=[A+\Delta A(t)]x(t)dt+B_1du(t)+[B_2+\Delta B(t)]d\omega(t),\\
z(t)&=C_1x(t)+D_1u(t),\\
y(t)&=[C_2+\Delta C(t)]x(t)+D_{21}\omega(t)+[D_{22}+\Delta D(t)]u(t),
\end{aligned}
\end{equation}
assume that the uncertainties are norm-bounded and of the form
\begin{equation}
\label{eqn:uncertaintydecomposition}
\begin{bmatrix}\Delta A(t)&\Delta B(t)\\ \Delta C(t)&\Delta D(t)\end{bmatrix}=\begin{bmatrix}H_1\\H_2\end{bmatrix}F(t)\begin{bmatrix}E_1&E_2\end{bmatrix},
\end{equation}
where $H_1, H_2, E_1$, and $E_2$ are known constant matrices and $F(t)$ is an unknown matrix function satisfying \eqref{eqn:normboundedcondition}.

Theorem 3.1 in \cite{xie1992h} proves that the system \eqref{eqn:syswithgeneraluncertainty} is quadratically stabilisable with the given $\gamma$ via a controller of the form \eqref{eqn:controller} if and only if the following corresponding deterministic system is stabilisable with the unitary disturbance attenuation via the same controller \eqref{eqn:controller}:
\begin{equation}
\label{eqn:detersystem}
\begin{aligned}
dx(t)&=Ax(t)+B_1du(t)+\begin{bmatrix}\sqrt{\epsilon}\rho H_1&\gamma^{-1}B_2\end{bmatrix}d\omega^\prime(t),\\
\tilde{z}(t)&=\begin{bmatrix}\frac{1}{\sqrt{\epsilon}}E_1\\C_1\end{bmatrix}x(t)+\begin{bmatrix}\frac{1}{\sqrt{\epsilon}}E_2\\D_1\end{bmatrix}u(t),\\
y(t)&=C_2x(t)+\begin{bmatrix}\sqrt{\epsilon}\rho H_2&\gamma^{-1}D_2\end{bmatrix}\omega^\prime(t),
\end{aligned}
\end{equation}
where $\omega^\prime$ is the disturbance input.

Now, the problem is to design a scaled $H^\infty$ controller for the system \eqref{eqn:detersystem} such that the closed-loop system is stable with unitary disturbance attenuation $g=1$. Applying the scaled $H^\infty$ control result from \cite{petersen1991first}, it is straightforward to obtain \eqref{eqn:simRiccati_a} and \eqref{eqn:simRiccati_b}. This completes the proof.
\end{proof}

\subsection{Existence of solutions to Riccati equations}
\label{sec}

According to Theorem \ref{the:Riccati}, the design of a robust $H^\infty$ controller is related to the solutions to Riccati equations \eqref{eqn:simRiccati_a} and \eqref{eqn:simRiccati_b}. Hence, in this section, we analyse the necessary and sufficient conditions on the existence of solutions to the Riccati equations \eqref{eqn:simRiccati_a} and \eqref{eqn:simRiccati_b}.

For the uncertainties of the OPO described in \eqref{eqn:differentialofquadratures}, if we denote 
$$A=\begin{bmatrix}-\frac{\kappa}{2}&0\\0& -\frac{\kappa}{2}\end{bmatrix},$$
we can obtain 
\begin{equation}
\label{eqn:systemuncertainty_1}
    \Delta A_1(t)=\chi \begin{bmatrix}\cos (\Delta \phi(t))&\sin (\Delta \phi(t))\\\sin (\Delta \phi(t))&-\cos (\Delta \phi(t))\end{bmatrix},
\end{equation}
where $\chi=\chi^{(2)}\beta$. 
Therefore, we may choose $H_1=\chi\mbox{I}$, $E_1=\mbox{I}$, and obtain
$$F_1(t)=\begin{bmatrix}\cos (\Delta \phi(t))&\sin (\Delta \phi(t))\\\sin (\Delta \phi(t))&-\cos (\Delta \phi(t)) \end{bmatrix}.
$$  
The norm-bounded condition can be ensured by the following equation
\begin{equation*}
F_1^T(t)F_1(t)=\mbox{I}\leq \rho_1^2 \mbox{I},
\end{equation*}
with $\rho_1=1$. Then we get the following proposition about the solutions to Riccati equations \eqref{eqn:simRiccati_a} and \eqref{eqn:simRiccati_b}.

\begin{proposition}
\label{existenceconditions_case1}
For the OPO in Fig.\ref{fig:plant} with parameters shown in \eqref{eqn:systemparameters} and the system uncertainty shown in \eqref{eqn:systemuncertainty_1}. If $\kappa_2 - \frac{\kappa_1}{\gamma^2 }>0$, we can find a positive $\epsilon>0$ such that
\begin{equation}
\label{con:controlable1_1}
\kappa_2 - \frac{\kappa_1}{\gamma^2 }- \chi^2\epsilon\rho_1^2> 0,
\end{equation}
and \eqref{eqn:simRiccati_a} has a stabilising solution $X$ if and only if:
\begin{equation}
\label{con:nopureimaginaryeigen1_1}
\epsilon\gamma^2\left[(\kappa_1-\kappa_2)^2-4\chi^2\rho_1^2\right]+4(\gamma^2\kappa_2-\kappa_1)\geq0.
\end{equation}
If $\kappa_2 - \frac{\kappa_1}{\gamma^2 }\leq 0$, we can find a positive $\epsilon>0$ such that
\begin{equation}
\label{con:controlable1_2}
\kappa_1\gamma^2 - \kappa_2 -\frac{1}{\epsilon}\geq 0,
\end{equation}
the Riccati equation \eqref{eqn:simRiccati_b} has a stabilising solution $Y$ if and only if:
\begin{equation}
\label{con:nopureimaginaryeigen1_2}
4\chi^2\rho_1^2\epsilon(\gamma^2\kappa_1-\kappa_2)+(\kappa_1-\kappa_2)^2-4\chi^2\rho_1^2\geq 0.
\end{equation}
\end{proposition}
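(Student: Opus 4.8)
The plan is to exploit the fact that, for this OPO, every coefficient matrix in \eqref{eqn:systemparameters} together with $H_1=\chi\mathrm{I}$ and $E_1=\mathrm{I}$ is a scalar multiple of the $2\times2$ identity, so that $G=\Gamma=\mathrm{I}$ and every matrix occurring in \eqref{eqn:simRiccati_a}--\eqref{eqn:simRiccati_b} is again proportional to $\mathrm{I}$. I would therefore look for solutions of the form $X=x\,\mathrm{I}$ and $Y=y\,\mathrm{I}$, which collapses the two matrix Riccati equations to scalar algebraic Riccati equations. A direct substitution gives $A-B_1G^{-1}D_1^TC_1=\frac{\kappa_2-\kappa_1}{2}\mathrm{I}$, $B_1G^{-1}B_1^T-\epsilon\rho_1^2H_1H_1^T-\gamma^{-2}B_2B_2^T=\bigl(\kappa_2-\tfrac{\kappa_1}{\gamma^2}-\epsilon\chi^2\rho_1^2\bigr)\mathrm{I}$ and $C_1^T(\mathrm{I}-D_1G^{-1}D_1^T)C_1=0$, so that \eqref{eqn:simRiccati_a} reduces to $r\,x^2-(\kappa_2-\kappa_1)x-\tfrac1\epsilon=0$ with $r:=\kappa_2-\tfrac{\kappa_1}{\gamma^2}-\epsilon\chi^2\rho_1^2$; the analogous reduction of \eqref{eqn:simRiccati_b} yields $\tilde r\,y^2-(\kappa_1-\kappa_2)y-\epsilon\chi^2\rho_1^2=0$ with $\tilde r:=\gamma^2\kappa_1-\kappa_2-\tfrac1\epsilon$.

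Next I would characterise when each scalar equation has a stabilising (hence positive) solution, using the standard fact that an algebraic Riccati equation has a unique stabilising solution precisely when the associated Hamiltonian matrix has no eigenvalue on the imaginary axis, provided the accompanying stabilizability/detectability conditions hold (which in this one-dimensional setting reduce to the relevant scalars being nonzero). For \eqref{eqn:simRiccati_a} the Hamiltonian is $\left[\begin{smallmatrix}\frac{\kappa_2-\kappa_1}{2}&-r\\-\frac1\epsilon&-\frac{\kappa_2-\kappa_1}{2}\end{smallmatrix}\right]$, whose eigenvalues are $\pm\sqrt{\tfrac{(\kappa_1-\kappa_2)^2}{4}+\tfrac{r}{\epsilon}}$; these avoid the imaginary axis exactly when $\tfrac{(\kappa_1-\kappa_2)^2}{4}+\tfrac{r}{\epsilon}\geq0$, and clearing denominators and inserting $r$ turns this into \eqref{con:nopureimaginaryeigen1_1}. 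The hypothesis $\kappa_2-\tfrac{\kappa_1}{\gamma^2}>0$ enters because it lets $\epsilon>0$ be chosen small enough that $r>0$, i.e.\ \eqref{con:controlable1_1}; with $r>0$ the effective control weighting is positive, the stabilizability condition holds automatically, and the larger root of the quadratic is the stabilising one and is positive, as needed for $X>0$ in Theorem~\ref{the:Riccati}. When instead $\kappa_2-\tfrac{\kappa_1}{\gamma^2}\leq0$ one cannot make $r>0$, so I would run the same argument on \eqref{eqn:simRiccati_b}: its Hamiltonian has eigenvalues $\pm\sqrt{\tfrac{(\kappa_1-\kappa_2)^2}{4}+\tilde r\,\epsilon\chi^2\rho_1^2}$, the achievable sign condition is now $\tilde r\geq0$, i.e.\ \eqref{con:controlable1_2}, and requiring the radicand to be nonnegative reduces after simplification to \eqref{con:nopureimaginaryeigen1_2}.

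It then remains to package these computations as the stated equivalences and to discharge the side conditions the Riccati theory requires: that the selected root makes the closed-loop scalar $\frac{\kappa_2-\kappa_1}{2}-r\,x$ (resp.\ $\frac{\kappa_1-\kappa_2}{2}-\tilde r\,y$) negative, and that this root is positive so that $X$ (resp.\ $Y$) is positive definite. I expect the main obstacle to be making the chain ``stabilising solution exists $\Leftrightarrow$ Hamiltonian free of imaginary eigenvalues $\Leftrightarrow$ radicand $\geq0$'' fully rigorous; in particular, treating the borderline case in which the radicand vanishes (a non-semisimple eigenvalue of the Hamiltonian at the origin), so that the ``$\geq$'' in \eqref{con:nopureimaginaryeigen1_1} and \eqref{con:nopureimaginaryeigen1_2} is correctly justified, and confirming that \eqref{con:controlable1_1} and \eqref{con:controlable1_2} genuinely encode the stabilizability requirement rather than merely the solvability of the scalar quadratic.
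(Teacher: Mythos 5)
Your proposal is correct and takes essentially the same route as the paper: the paper likewise applies the standard ARE criterion (positive semidefinite weighting matrices, controllability of the relevant pair, and a Hamiltonian matrix with no pure imaginary eigenvalues, via Theorem 14.16 of its cited reference), uses $\kappa_2-\kappa_1/\gamma^2>0$ (resp.\ $\leq 0$) to choose $\epsilon>0$ so that \eqref{con:controlable1_1} (resp.\ \eqref{con:controlable1_2}) encodes semidefiniteness plus controllability, and then reads \eqref{con:nopureimaginaryeigen1_1} and \eqref{con:nopureimaginaryeigen1_2} off the Hamiltonian spectrum. Your explicit scalar reduction and eigenvalue formulas simply make the computation the paper leaves implicit, and they reproduce the stated inequalities exactly (the borderline ``$\geq$'' versus ``$>$'' issue you flag is present in the paper's own statement as well).
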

\begin{proof}
Theorem 14.16 in \cite{levine2018control} provides a sufficient and necessary condition for the solutions to Riccati equation in the following form
\begin{equation}
\label{eqn:generalriccati}
A^TX+XA-XBX+C=0,
\end{equation}
under the condition that $B=B^T, B\geq 0$ and $C=C^T, C\geq 0$. In particular, Theorem 14.16 in \cite{levine2018control} states that the Riccati equation \eqref{eqn:generalriccati} has a stabilising solution if and only if $(A, B)$ is controllable and the Hamiltonian matrix $H=\begin{bmatrix}A&-B\\-C&-A^T\end{bmatrix}$ has no pure imaginary eigenvalues. We first consider the semi-positive definite of $B_1G^{-1}B_1^T-\epsilon \rho^2H_1H_1^T-\gamma^{-2}B_2B_2^T$ for the first Riccati equation \eqref{eqn:simRiccati_a}, we obtain
\begin{equation}
\label{eqn:semipositiveB1}
    \kappa_2 - \frac{\kappa_1}{\gamma^2 }- \chi^2\epsilon\rho_1^2\geq 0,
\end{equation}
only when $\kappa_2-\gamma^{-2}\kappa_1>0$ can we find a positive $\epsilon$ to ensure that \eqref{eqn:semipositiveB1} holds.

To ensure the controllability of 
$$(A-B_1G^{-1}D_1^TC_1, B_1G^{-1}B_1^T-\epsilon \rho^2H_1H_1^T-\gamma^{-2}B_2B_2^T),$$ in the first Riccati equation \eqref{eqn:simRiccati_a}, we first need to choose a proper $\epsilon$ such that the following two equations are not satisfied simultaneously: 
\begin{equation}\label{eqn:intercontrollable1_1}
\begin{aligned}
\kappa_2 - \frac{\kappa_1}{\gamma^2 }- \chi^2\epsilon\rho_1^2&=0,\\
\frac{\kappa_1-\kappa_2}{2}\left(\frac{\kappa_1}{\gamma^2} -\kappa_2 + \chi^2\epsilon\rho_1^2\right)&=0,
\end{aligned}
\end{equation}
which means 
\begin{equation}
\label{eqn:controllability1}
    \kappa_2-\frac{\kappa_1}{\gamma^2}-\chi^2\epsilon\rho_1\neq 0.
\end{equation}
From \eqref{eqn:controllability1} and \eqref{eqn:semipositiveB1}, we obtain \eqref{con:controlable1_1} when $\kappa_2-\frac{\kappa_1}{\gamma^2}>0$.

From \eqref{con:controlable1_1}, we ensure that $B_1G^{-1}B_1^T-\epsilon \rho^2H_1H_1^T-\gamma^{-2}B_2B_2^T$ is semi-positive definite, and $$(A-B_1G^{-1}D_1^TC_1, B_1G^{-1}B_1^T-\epsilon \rho^2H_1H_1^T-\gamma^{-2}B_2B_2^T)$$ is controllable. We are now in the position to ensure that the Hamiltonian matrix $H$ of the Riccati equation \eqref{eqn:simRiccati_a} has no pure imaginary eigenvalues, which results in the inequality \eqref{con:nopureimaginaryeigen1_1}.

Similarly, when $\kappa_2-\frac{\kappa_1}{\gamma^2}<0$, we can find a positive $\epsilon$ to obtain
\begin{equation}
\label{eqn:positiveC1}
    \kappa_1\gamma^2 - \kappa_2 -\frac{1}{\epsilon}\geq 0,
\end{equation}
which ensures $\gamma^2C_2^T\Gamma^{-1}C_2-\frac{1}{\epsilon}E_1^TE_1-C_1^TC_1\geq 0$ in the second Riccati equation \eqref{eqn:simRiccati_a}.

From the controllability of 
\begin{equation*}
(A-B_2D_2^T\Gamma^{-1}C_2, \gamma^2C_2^T\Gamma^{-1}C_2-\frac{1}{\epsilon}E_1^TE_1-C_1^TC_1),
\end{equation*}
we obtain that the following two inequalities are not satisfied simultaneously:
\begin{equation}\label{eqn:intercontrollable1_2}
\begin{aligned}
\kappa_1\gamma^2 - \kappa_2 -\frac{1}{\epsilon}&=0,\\
-\frac{\kappa_1-\kappa_2}{2}\left(\kappa_2 - \gamma^2\kappa_1 +\frac{1}{\epsilon}\right)&=0,
\end{aligned}
\end{equation}
which means 
\begin{equation}
\label{eqn:controlB2}
    \kappa_1\gamma^2-\kappa_2-\frac{1}{\epsilon}\neq 0.
\end{equation}
From \eqref{eqn:controlB2} and \eqref{eqn:positiveC1}, we can find a positive $\epsilon$ to obtain \eqref{con:controlable1_2} when $\kappa_2-\frac{\kappa_1}{\gamma^2}\leq 0$. Then we turn to consider the eigenvalues of the Hamiltonian matrix for the second Riccati \eqref{eqn:simRiccati_b}, from which the inequality \eqref{con:nopureimaginaryeigen1_2} is obtained. Thus, we complete the proof.
\end{proof}

\begin{remark}
When $\kappa_2-\frac{\kappa_1}{\gamma^2}>0$, we can obtain the sufficient and necessary conditions of the solution existence to the first Riccati equation \eqref{eqn:simRiccati_a}. While in this case Theorem 14.16 in \cite{levine2018control} cannot be applied to the second Riccati equation due to the negativity of $\gamma^2C_2^T\Gamma^{-1}C_2-\frac{1}{\epsilon}E_1^TE_1-C_1^TC_1$. However, it is still possible to obtain a stabilising solution to the second Riccati equation in some cases (see the numerical example in Sec. \ref{sec:numericalcalculationcase1}). We have a similar conclusion when $\kappa_2-\frac{\kappa_1}{\gamma^2}\leq 0$. The conditions we obtained in Proposition \ref{existenceconditions_case1} are only sufficient and necessary for one of the two Riccati equations \eqref{eqn:simRiccati_a} and \eqref{eqn:simRiccati_b}.
\end{remark}

\subsection{Numerical example for passive controller design}
\label{sec:numericalcalculationcase1}
In this section, a numerical example of an OPO with specific parameters is used to show the design procedure for a passive coherent controller \cite{serikawa2016creation}, the simulation results are obtained by using MATLAB 2020b. By taking the values of transmissivity $T_i$ and of the optical length $l$ from experiments in \cite{serikawa2016creation}, we calculate the decay rates for mirrors $M_1$ and $M_2$ as $\kappa_1=\frac{cT_1}{l}=0.0011$, $\kappa_2=\frac{cT_2}{l}=0.8264$, $\kappa=\kappa_1+\kappa_2=0.8275$, and take the pumping-related coefficient as $\chi=\chi^{(2)}\beta=0.1\frac{\kappa}{2}=0.0414$.

Before we design the controller, a positive $\epsilon$ needs to be chosen such that the two Riccati equations \eqref{eqn:simRiccati_a} and \eqref{eqn:simRiccati_b} have stabilising solutions. The range of possible $\epsilon$ changes with different disturbance attenuation $\gamma$ and $\rho_1$, which is shown in Fig. \ref{fig:case1possibleepsilon}.
\begin{figure}
\includegraphics[width=0.5\textwidth]{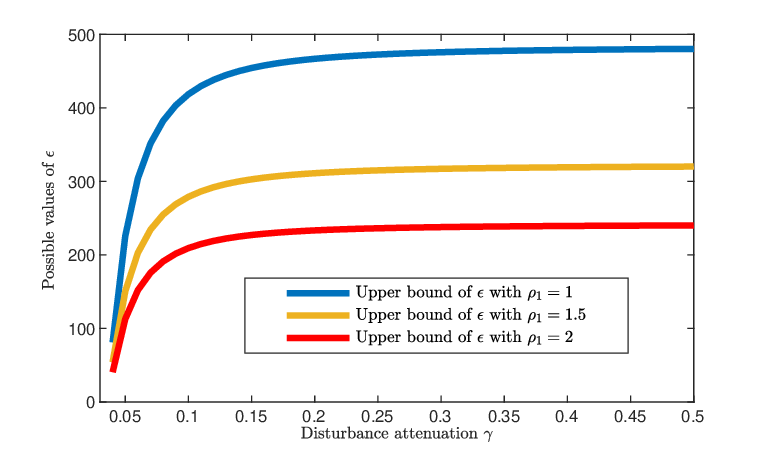}
\caption{Ranges of $\epsilon$ change with disturbance attenuation $\gamma$ for different $\rho_1$.}
\label{fig:case1possibleepsilon}
\end{figure}

In Fig. \ref{fig:case1possibleepsilon}, three different solid lines represent the upper bounds on $\epsilon$ for different $\rho_1$ based on the condition \eqref{con:controlable1_1}. Also a positive $\epsilon$ should be guaranteed.
The value of $\epsilon$ for a given $\gamma$ and $\rho_1$ can be chosen within the area between the upper bound and the lower bound, which is shown as the horizontal axis in the Fig. \ref{fig:case1possibleepsilon}.

For a given disturbance attenuation $\gamma=0.05$, we can choose $\epsilon=1$ to ensure the existence of solutions to Riccati equations based on Fig. \ref{fig:case1possibleepsilon}. Then we obtain the system parameters as:
\begin{equation} 
\begin{aligned}
A&=\begin{bmatrix}-0.4138&0\\0&-0.4138\end{bmatrix},\\
F_1&=\begin{bmatrix}\cos(\Delta \phi(t))&\sin(\Delta \phi(t))\\ \sin(\Delta \phi(t))&-\cos(\Delta \phi(t))\end{bmatrix},\\
B_1&=\begin{bmatrix}0.9091&0\\0&0.9091\end{bmatrix}, B_2=\begin{bmatrix}0.0332&0\\0&0.0332\end{bmatrix},\\
C_1&=\begin{bmatrix}0.9091&0\\0&0.9091\end{bmatrix}, C_2=\begin{bmatrix}0.0332&0\\0&0.0332\end{bmatrix},\\
D_1&=-\mbox{I}, D_2=-\mbox{I}.
\end{aligned}
\end{equation}
When the phase fluctuations satisfy $\Delta \phi(t)\in [-\pi, \pi]$, we have $\rho_1=1$, and can obtain solutions of the Riccati equations \eqref{eqn:simRiccati_a} and \eqref{eqn:simRiccati_b} as follows:
\begin{equation*}
X=\begin{bmatrix}3.0092&0\\0&3.0092\end{bmatrix}, Y=\begin{bmatrix}0.0021&0\\0&0.0021\end{bmatrix}.\\
\end{equation*}
The controller parameters are designed as:
\begin{equation}
\label{eqn:controller1parameters}
\begin{aligned}
\mathcal{A}_c&=\left[\begin{smallmatrix} -2.0763&0\\0&-2.0763 \end{smallmatrix}\right],\mathcal{B}_c=\left[\begin{smallmatrix} -0.0334&0\\0&-0.0334 \end{smallmatrix}\right],\\
\mathcal{C}_c&=\left[\begin{smallmatrix} -1.8265&0\\0&-1.8265 \end{smallmatrix}\right].
\end{aligned}
\end{equation}

For the designed controller given by \eqref{eqn:controller1parameters}, its physical realizability can be checked by the following Lemma.
\begin{lemma}\cite{james2008h}
\label{lemma:physicalrealisation}
The system described by $\{A, B, C\}$ is physical realisable if and only if:
\begin{equation}
\label{eqn:physicalrealisation}
\begin{aligned}
&\mbox{i}A\Theta+\mbox{i}\Theta A^T+BT_\omega B^T=0,\\
&B\left[\begin{smallmatrix}I\\0\end{smallmatrix}\right]=\Theta C^TP^T\times \left[\begin{smallmatrix}0&I\\-I&0\end{smallmatrix}\right]P=\Theta C^T{\rm diag}(J).
\end{aligned}
\end{equation}
Here, $P_m$ is a $2m\times 2m$ permutation matrix, for an arbitrary vector $a=\begin{bmatrix}a_1& a_2& \cdots &a_{2m}\end{bmatrix}^T$, then $P_m a=\begin{bmatrix}a_1 &a_3& \cdots& a_{2m-1}& a_2& a_4& \cdots &a_{2m}\end{bmatrix}^T$.
\end{lemma}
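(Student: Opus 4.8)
The plan is to derive the two displayed conditions from the Hudson--Parthasarathy description of an open linear quantum system, rather than merely quoting \cite{james2008h}. Recall that a system whose variables $x$ obey $[x_j,x_k]=2\mathrm{i}\Theta_{jk}$, generated by a quadratic Hamiltonian $H=\frac{1}{2}x^TRx$ with $R=R^T$ real and a linear field coupling $L=\Lambda x$, evolves in the Heisenberg picture as a linear quantum stochastic differential equation $dx=Ax\,dt+B\,dw$, $dy=Cx\,dt+D\,dw$, where the noise $w$ has It\^o table $dw\,dw^T=(\mathrm{I}+\mathrm{i}T_\omega)\,dt$ and $A,B,C,D$ are explicit affine/bilinear functions of $R$, $\Lambda$, $\Theta$ and $T_\omega$ in the real quadrature representation (with the interleaving permutations $P_m$ relating the quadrature ordering to the block ordering). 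Writing down these standard formulae is the first step; the crucial structural fact is that $A$ splits into a ``Hamiltonian part'' proportional to $\Theta R$ and a ``dissipative part'' built entirely from $\Lambda$.

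For the necessity direction, I would substitute these expressions into the two identities and check that they hold identically. The key point for the first identity is that $A\Theta+\Theta A^T$ annihilates the Hamiltonian contribution, because $\Theta R$ is $\Theta$-skew-adjoint whenever $R=R^T$; only the dissipative term survives, and after using the skew-symmetry of $T_\omega$ and of $\Theta$ it reduces to exactly $-B T_\omega B^T$. The second identity is the statement that the output matrix $C$ is generated by the \emph{same} coupling $\Lambda$ as $B$: rearranging the two expressions for $B$ and $C$ in terms of $\Lambda$ produces the required relation $B\left[\begin{smallmatrix}\mathrm{I}\\0\end{smallmatrix}\right]=\Theta C^T\,\mathrm{diag}(J)$, and the stated form of $D$ then follows from the convention fixing which noise quadratures couple in.

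For sufficiency, suppose $(A,B,C)$ satisfies the two conditions together with the commutation relation on $x(0)$. I would reconstruct a valid pair $(R,\Lambda)$: read the coupling $\Lambda$ off from $B$ (equivalently from its quadrature blocks), use the second condition to verify that $C$ is precisely the matching output-coupling matrix, define $R$ as $\frac{1}{2}$ times the $\Theta$-symmetric part of $A$ with the dissipative contribution subtracted, and invoke the first condition to conclude $R=R^T$. The linear QSDE built from this $(R,\Lambda)$ is by construction the conjugation of $x$ by a unitary adapted process, hence physically realisable, and it reproduces the given $(A,B,C)$; this establishes the equivalence.

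The main obstacle is bookkeeping rather than conceptual depth: one must keep the ``doubled-up'' creation/annihilation picture and the real quadrature picture consistent, carry the permutation matrices $P_m$ correctly through every product, and use the precise It\^o matrix $\mathrm{I}+\mathrm{i}T_\omega$ so that $\mathrm{Re}(\Lambda^\dagger\Lambda)$ and $\mathrm{Im}(\Lambda^\dagger\Lambda)$ land in their proper places in $A$ and $B$. Since the statement is due to \cite{james2008h}, the alternative is simply to invoke it; the argument above is the route to a self-contained derivation.
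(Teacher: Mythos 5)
The paper does not actually prove this lemma --- it is quoted directly from \cite{james2008h} --- and your sketch is essentially the argument given in that reference (its physical-realizability theorem): necessity by substituting $A=2\Theta\bigl(R+\Im(\Lambda^{\dagger}\Lambda)\bigr)$ and the corresponding $B$, $C$, $D$ of an open quantum harmonic oscillator into the two identities, and sufficiency by reconstructing a symmetric $R$ and a coupling $\Lambda$ from any $(A,B,C)$ satisfying them. So your route is correct and coincides with the source the paper invokes; the only point a full write-up must not drop is the accompanying condition on $D$ and the canonical form of $\Theta$, carried through the quadrature/permutation bookkeeping you defer.
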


However, most controllers designed by the proposed $H^\infty$ control method do not satisfy the physical realizability conditions in Lemma \ref{lemma:physicalrealisation}. To achieve the coherent feedback control design, we will first apply the algorithm proposed in \cite{vuglar2016quantum} to introduce additional quantum noises such that the controller is physical realisable. Then we may implement the controller by using some passive optical components. In particular, for the control parameters in \eqref{eqn:controller1parameters}, which do not satisfy the physical realizability condition \eqref{eqn:physicalrealisation}, it is possible to introduce additional quantum noises, by which we may reconstruct the input matrix $\begin{bmatrix}\mathcal{B}_c&B_{\nu1}&B_{\nu2}\end{bmatrix}$ to ensure that \eqref{eqn:physicalrealisation} holds. According to the algorithm from \cite{vuglar2016quantum}, we can calculate the input matrices $B_{\nu1}, B_{\nu2}$ in terms of the additional noises $\nu_1$ and $\nu_2$ as
\begin{equation}
B_{\nu1}=\begin{bmatrix}1.8265&0\\0&1.8265\end{bmatrix}, B_{\nu2}=\begin{bmatrix}0.9029&0\\0&0.9029\end{bmatrix}.
\end{equation}
It can be verified that the controller $\{\mathcal{A}_c, \begin{bmatrix}\mathcal{B}_c&B_{\nu1}&B_{\nu2}\end{bmatrix}, \mathcal{C}_c\}$ satisfies \eqref{eqn:physicalrealisation}. The controller can be implemented by an empty cavity with three mirrors, and the decay rates of $M_1^{\prime}$, $M_2^{\prime}$ and $M_3^{\prime}$ are $\kappa_1=0.0011$, $\kappa_2=3.3361$, and $\kappa_3=0.8152$. The structure of the passive controller is shown in Fig. \ref{fig:controller1}.
\begin{figure}
\includegraphics[width=0.5\textwidth]{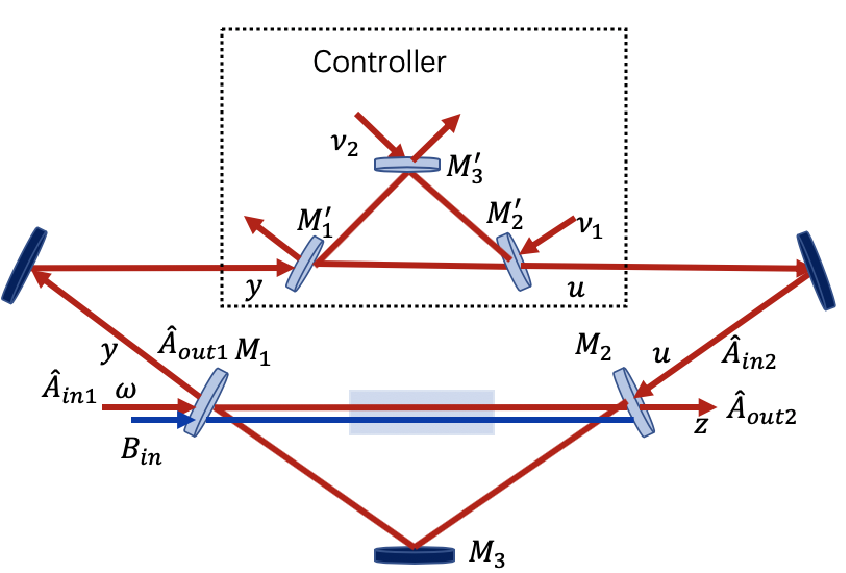}
\caption{The connection of the designed passive quantum controller and the plant, where the controller is shown in dashed square. The passive controller is composed by three mirrors $M_1^{\prime}, M_2^{\prime}$, and $M_3^{\prime}$, which are partially transmissive for the field. $\nu_1$ and $\nu_2$ are two additional quantum noises, $y$ is the output of the plant, and $u$ is the control field.}
\label{fig:controller1}
\end{figure}

\section{Active Quantum Controller Design}
\label{sec4}
In this section, we consider another decomposition of the system uncertainty $\Delta A(t)$ for the OPO. If we denote $A$ in \eqref{eqn:systemparameters} as:
$$A=\begin{bmatrix}-\frac{\kappa}{2}&0\\0& -\frac{\kappa}{2}\end{bmatrix}+\chi\begin{bmatrix}1&0\\0&-1\end{bmatrix},$$ 
we obtain
\begin{equation}
\label{eqn:uncertainty_2}
\Delta A_2(t)=\chi \begin{bmatrix}-1+\cos (\Delta \phi(t))&\sin(\Delta \phi(t))\\\sin(\Delta \phi(t))&1-\cos(\Delta \phi(t))\end{bmatrix}.
\end{equation}

We may choose $H_1=\chi\mbox{I}$, $E_1=\mbox{I}$, and 
$$F_2(t)=\begin{bmatrix}-1+\cos (\Delta \phi(t))&\sin(\Delta \phi(t))\\\sin(\Delta \phi(t))&1-\cos(\Delta \phi(t))\end{bmatrix}.$$
The norm-bounded condition can be verified by the following equations:
\begin{equation}
\label{eqn:normF2}
F_2^T(t)F_2(t)=\begin{bmatrix}2-\cos(\Delta \phi(t))&0\\0&2-2\cos(\Delta \phi(t))\end{bmatrix}\leq \rho_2^2 \mbox{I},
\end{equation}
where the bound $\rho_2$ will change with the range of $\Delta \phi(t)$. Particularly, for any $\alpha\in [0,\pi]$ we have 
\begin{equation*}
\begin{aligned}
&\Delta \phi(t)\in [-\alpha, 0], \rho_2\in[2,+\infty),\\
&\Delta \phi(t)\in [0, \alpha], \rho_2\in[\sqrt{2-2\cos(\alpha)},+\infty).
 \end{aligned}
\end{equation*}
In this decomposition, we have $\Delta A_2(t)=0$ when $\Delta \theta (t)=0$. However, the required controller needs to be designed as an active system. Here an active quantum controller means the controller is implemented by using active optical components, such as another OPO.

\subsection{Existence of solutions to Riccati equations}
The robust $H^\infty$ control design theory proposed in Section \ref{sec:controldesigntheory} works directly for the case with uncertainty bound $\rho_2$. However, since the matrix $A$ in this case is different from the case of passive controller, the existence conditions of solutions to Riccati equations \eqref{eqn:simRiccati_a} and \eqref{eqn:simRiccati_b} are different. We have the following proposition.
\begin{proposition}
\label{existenceconditions_case2}
For the OPO in Fig.\ref{fig:plant} with parameters shown in \eqref{eqn:systemparameters} and the system uncertainty shown in \eqref{eqn:uncertainty_2}. If $\kappa_2-\frac{\kappa_1}{\gamma^2}>0$, we can find a positive $\epsilon>0$ such that
\begin{equation}
\label{con:controlable2_1}
    \kappa_2 - \frac{\kappa_1}{\gamma^2 }- \chi^2\epsilon\rho_2^2> 0,
\end{equation}
and the equation \eqref{eqn:simRiccati_a} has a stabilising solution $X$ if and only if:
\begin{equation}
\label{con:nopureimaginaryeigen2_1}
\begin{aligned}
\epsilon\gamma^2\left[(\kappa_1+\kappa_2)^2-4\chi(\kappa_1-\kappa_2)+4\chi^2(1-\rho_2^2)\right]\\
+ 4(\gamma^2\kappa_2-\kappa_1) \geq 0,\\ 
\epsilon\gamma^2\left[(\kappa_1+\kappa_2)^2+4\chi(\kappa_1-\kappa_2)+4\chi^2\
(1-\rho_2^2)\right]\\
+4(\gamma^2\kappa_2-\kappa_1)\geq 0.
\end{aligned}
\end{equation}
If $\kappa_2-\frac{\kappa_1}{\gamma^2}\leq 0$, we can find a positive $\epsilon$ such that
\begin{equation}
\label{con:controlable2_2}
\kappa_1\gamma^2 - \kappa_2 -\frac{1}{\epsilon}>0,
\end{equation}
and the Riccati equation \eqref{eqn:simRiccati_b} has a stabilising solution $Y$ if and only if:
\begin{equation}
\label{con:nopureimaginaryeigen2_2}
\begin{aligned}
4\chi^2\rho_2^2\epsilon(\gamma^2\kappa_1-\kappa_2)+(\kappa_1+\kappa_2)^2-4\chi(\kappa_1-\kappa_2)\\
+4\chi^2(1-\rho_2^2) \geq 0,\\
4\chi^2\rho_2^2\epsilon(\gamma^2\kappa_1-\kappa_2)+(\kappa_1+\kappa_2)^2+4\chi(\kappa_1-\kappa_2)\\
+4\chi^2(1-\rho_2^2) \geq 0.
\end{aligned}
\end{equation}
\end{proposition}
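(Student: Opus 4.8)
The plan is to follow the route used for Proposition~\ref{existenceconditions_case1}, the one new ingredient being that the nominal matrix is now $A=\mathrm{diag}(-\kappa/2+\chi,\,-\kappa/2-\chi)$ rather than a scalar multiple of $\mathrm{I}$, which changes the Hamiltonian attached to each Riccati equation. First I would recall Theorem~14.16 of \cite{levine2018control}: a Riccati equation $\mathcal{A}^TX+X\mathcal{A}-X\mathcal{B}X+\mathcal{C}=0$ with $\mathcal{B}=\mathcal{B}^T\ge 0$ and $\mathcal{C}=\mathcal{C}^T\ge 0$ has a stabilising solution if and only if $(\mathcal{A},\mathcal{B})$ is controllable and the Hamiltonian $\left[\begin{smallmatrix}\mathcal{A}&-\mathcal{B}\\-\mathcal{C}&-\mathcal{A}^T\end{smallmatrix}\right]$ has no eigenvalue on the imaginary axis. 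Substituting the OPO data \eqref{eqn:systemparameters} with $H_1=\chi\mathrm{I}$, $E_1=\mathrm{I}$, $\rho=\rho_2$, and using $G=D_1^TD_1=\mathrm{I}$, $\Gamma=D_2D_2^T=\mathrm{I}$ (so $\mathrm{I}-D_1G^{-1}D_1^T=0$ and $\mathrm{I}-D_2^T\Gamma^{-1}D_2=0$), one finds that for \eqref{eqn:simRiccati_a} the data are $\mathcal{A}=A+\kappa_2\mathrm{I}=\mathrm{diag}(\tfrac{\kappa_2-\kappa_1}{2}+\chi,\ \tfrac{\kappa_2-\kappa_1}{2}-\chi)$, $\mathcal{B}=(\kappa_2-\kappa_1/\gamma^2-\chi^2\epsilon\rho_2^2)\mathrm{I}$, $\mathcal{C}=\tfrac{1}{\epsilon}\mathrm{I}$, and for \eqref{eqn:simRiccati_b} the data are $\mathcal{A}=A+\kappa_1\mathrm{I}=\mathrm{diag}(\tfrac{\kappa_1-\kappa_2}{2}+\chi,\ \tfrac{\kappa_1-\kappa_2}{2}-\chi)$, $\mathcal{B}=(\gamma^2\kappa_1-1/\epsilon-\kappa_2)\mathrm{I}$, $\mathcal{C}=\epsilon\rho_2^2\chi^2\mathrm{I}$.

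Next I would dispose of the semidefiniteness and controllability requirements. Since $\mathcal{B}$ for \eqref{eqn:simRiccati_a} is a scalar multiple of $\mathrm{I}$, the condition $\mathcal{B}\ge 0$ reads $\kappa_2-\kappa_1/\gamma^2-\chi^2\epsilon\rho_2^2\ge 0$, which is achievable for some $\epsilon>0$ exactly when $\kappa_2-\kappa_1/\gamma^2>0$; symmetrically, when $\kappa_2-\kappa_1/\gamma^2\le 0$ one works instead with \eqref{eqn:simRiccati_b}, whose $\mathcal{B}\ge 0$ means $\kappa_1\gamma^2-\kappa_2-1/\epsilon\ge 0$. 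For controllability I would apply the PBH test: with $\mathcal{A}$ diagonal and $\mathcal{B}$ a scalar multiple of $\mathrm{I}$, the pair $(\mathcal{A},\mathcal{B})$ is controllable if and only if $\mathcal{B}\neq 0$, i.e.\ exactly the strict inequality \eqref{con:controlable2_1} (resp.\ \eqref{con:controlable2_2}). Hence in each regime a suitable $\epsilon$ exists, and it remains to analyse the imaginary-axis condition.

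This last step is the substantive part, and it is where the non-scalar $A$ makes itself felt. Because $\mathcal{A}$ is now $\mathrm{diag}(\mu_+,\mu_-)$ with $\mu_\pm=\tfrac{\kappa_2-\kappa_1}{2}\pm\chi$ (resp.\ $\tilde{\mu}_\pm=\tfrac{\kappa_1-\kappa_2}{2}\pm\chi$), while $\mathcal{B}$ and $\mathcal{C}$ stay scalar multiples of $\mathrm{I}$, a permutation of coordinates block-diagonalises the $4\times 4$ Hamiltonian into two $2\times 2$ Hamiltonian blocks $\left[\begin{smallmatrix}\mu_\pm&-b\\-c&-\mu_\pm\end{smallmatrix}\right]$, each with eigenvalues $\pm\sqrt{\mu_\pm^2+bc}$. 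So there is no imaginary-axis eigenvalue precisely when $\mu_+^2+bc\ge 0$ and $\mu_-^2+bc\ge 0$ both hold; this is why \eqref{con:nopureimaginaryeigen2_1} and \eqref{con:nopureimaginaryeigen2_2} each comprise two inequalities, whereas in Proposition~\ref{existenceconditions_case1} the matrix $\mathcal{A}$ was a multiple of $\mathrm{I}$ and a single inequality sufficed. Inserting $b$ and $c$ and clearing denominators — by $4\gamma^2\epsilon$ for \eqref{eqn:simRiccati_a} and by $4$ for \eqref{eqn:simRiccati_b} — converts the two scalar inequalities into the displayed conditions.

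I would expect the only genuine difficulty to be bookkeeping rather than anything conceptual: keeping the two regimes $\kappa_2-\kappa_1/\gamma^2>0$ and $\kappa_2-\kappa_1/\gamma^2\le 0$ apart, identifying in each regime which of the two Riccati equations Theorem~14.16 can legitimately be applied to, and carrying the enlarged bound $\rho_2$ — which by \eqref{eqn:normF2} may be as large as $2$ — through the scalar inequalities without sign slips. The conceptual point is simply that, relative to Proposition~\ref{existenceconditions_case1}, the non-scalar nominal $A$ splits each spectral condition into a pair.
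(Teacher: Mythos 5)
Your route is the same as the paper's: the published proof simply notes that the semidefiniteness/controllability conditions \eqref{con:controlable2_1} and \eqref{con:controlable2_2} carry over unchanged from Proposition~\ref{existenceconditions_case1} (they do not involve $A$) and that \eqref{con:nopureimaginaryeigen2_1}--\eqref{con:nopureimaginaryeigen2_2} come from requiring the Hamiltonians of \eqref{eqn:simRiccati_a} and \eqref{eqn:simRiccati_b} to have no imaginary-axis eigenvalues, ``details omitted''; you supply exactly those details (Theorem 14.16 of \cite{levine2018control}, the PBH test, and the observation that the $4\times4$ Hamiltonian splits into two $2\times2$ blocks because $\mathcal{A}$ is diagonal but no longer scalar, which is indeed why each condition becomes a pair of inequalities). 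One caveat: you assert without computation that clearing denominators ``converts the two scalar inequalities into the displayed conditions,'' but carrying the algebra through, $4\epsilon\gamma^2(\mu_\pm^2+bc)\ge 0$ with $\mu_\pm=\tfrac{\kappa_2-\kappa_1}{2}\pm\chi$, $b=\kappa_2-\kappa_1/\gamma^2-\chi^2\epsilon\rho_2^2$, $c=1/\epsilon$ gives $\epsilon\gamma^2\bigl[(\kappa_1-\kappa_2)^2\mp4\chi(\kappa_1-\kappa_2)+4\chi^2(1-\rho_2^2)\bigr]+4(\gamma^2\kappa_2-\kappa_1)\ge 0$, and analogously for \eqref{eqn:simRiccati_b}, i.e.\ with $(\kappa_1-\kappa_2)^2$ where the proposition prints $(\kappa_1+\kappa_2)^2$; since the cross terms $\mp4\chi(\kappa_1-\kappa_2)$ you and the proposition share can only arise from the shifted matrices $A+\kappa_2\mbox{I}$ and $A+\kappa_1\mbox{I}$, your version appears to be the correct one and the $(\kappa_1+\kappa_2)^2$ in \eqref{con:nopureimaginaryeigen2_1} and \eqref{con:nopureimaginaryeigen2_2} a slip in the statement, but you should carry out and state this final step explicitly rather than claiming it reproduces the displayed conditions verbatim.
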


\begin{proof}
Since the form of the matrix $A$ does not change the inequalities $B_1G^{-1}B_1^T-\epsilon \rho^2H_1H_1^T-\gamma^{-2}B_2B_2^T\geq 0$ and $\gamma^2C_2^T\Gamma^{-1}C_2-\frac{1}{\epsilon}E_1^TE_1-C_1^TC_1\geq 0$, we obtain the conditions \eqref{con:controlable2_1} and \eqref{con:controlable2_2} directly from the Proposition \ref{existenceconditions_case1}.

Then by ensuring that the Hamiltonian matrix for the two Riccati equations has no pure imaginary eigenvalues, we obtain \eqref{con:nopureimaginaryeigen2_1} and \eqref{con:nopureimaginaryeigen2_2}, and the details are omitted here.
\end{proof}

\subsection{Numerical example for active controller design}
In this section, we give a numerical example for the same OPO in Section \ref{sec:numericalcalculationcase1}. Fig. \ref{fig:case2possibleepsilon} shows the possible $\epsilon$ for different $\gamma$ and $\rho_2$ to ensure that the Riccati equations \eqref{eqn:simRiccati_a} and \eqref{eqn:simRiccati_b} have stabilising solutions. Three different lines represent the upper bounds on $\epsilon$. The values of $\epsilon$ for a given $\gamma$ and $\rho_2$ can be chosen within the area between the upper bound and the lower bound.
\begin{figure}
\includegraphics[width=0.5\textwidth]{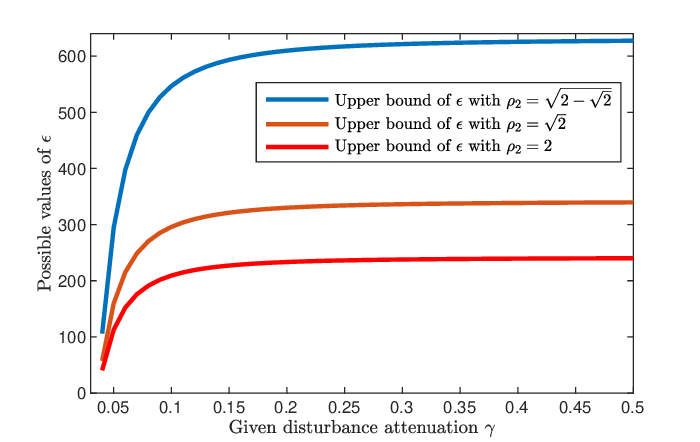}
\caption{Ranges of $\epsilon$ change with disturbance attenuation $\gamma$ for different $\rho_2$.}
\label{fig:case2possibleepsilon}
\end{figure}

For a given disturbance attenuation $\gamma=0.05$, we may choose $\epsilon=1$, and obtain the system parameters as:
\begin{equation} 
\begin{aligned}
A&=\begin{bmatrix}-0.3724&0\\0&-0.4551\end{bmatrix},\\
F_2&=\begin{bmatrix}-1+\cos(\Delta \phi(t))&\sin(\Delta \phi(t))\\ \sin(\Delta \phi(t))&1-\cos(\Delta \phi(t))\end{bmatrix},\\
B_1&=\begin{bmatrix}0.9091&0\\0&0.9091\end{bmatrix}, B_2=\begin{bmatrix}0.0332&0\\0&0.0332\end{bmatrix},\\
C_1&=\begin{bmatrix}0.9091&0\\0&0.9091\end{bmatrix}, C_2=\begin{bmatrix}0.0332&0\\0&0.0332\end{bmatrix},\\
D_1&=-\mbox{I}, D_2=-\mbox{I}.
\end{aligned}
\end{equation}
When the phase uncertainty satisfies $\Delta \phi(t)\in [-\pi, \pi]$, we have $\rho_2=2$ and obtain solutions to the Riccati equations \eqref{eqn:simRiccati_a} and \eqref{eqn:simRiccati_a} as follows:
\begin{equation*}
X=\begin{bmatrix}3.2125&0\\0&2.8733\end{bmatrix}, Y=\begin{bmatrix}0.0094&0\\0&0.0077\end{bmatrix}.\\
\end{equation*}
According to Theorem \ref{the:Riccati}, the controller parameters are:
\begin{equation}
\label{eqn:controller2parameters}
\begin{aligned}
\mathcal{A}_c&=\left[\begin{smallmatrix} -2.2219&0\\0&-2.0109 \end{smallmatrix}\right],\mathcal{B}_c=\left[\begin{smallmatrix} -0.0342&0\\0&-0.0339 \end{smallmatrix}\right],\\
\mathcal{C}_c&=\left[\begin{smallmatrix} -2.0113&0\\0&-1.7030 \end{smallmatrix}\right].
\end{aligned}
\end{equation}

Similarly, we first introduce additional quantum noises to ensure that this active quantum controller is physical realisable. We obtain
\begin{equation}
B_{\nu1}=\begin{bmatrix}2.0113 &0\\ 0 &1.7030\end{bmatrix}, B_{\nu2}=\begin{bmatrix}0.8979&0\\0&0.8979\end{bmatrix}.
\end{equation}
Since the diagonal elements are not the same for matrices $\mathcal{A}_c$, $\mathcal{B}_c$, $\mathcal{C}_c$, and $B_{\nu1}$, the controller will be implemented as a cascade of an OPO and three static squeezers (See \cite{liu2020fault} for details of static squeezers). The structure is shown in Fig. \ref{fig:controller2}. The decay rates for mirrors in the OPO are $\kappa_1, \kappa_2$, and $\kappa_3$, and the pumping-related coefficient is $\chi_c$. The three static squeezers are shown in dashed squares. All these three static squeezers are composed by two fully reflective mirrors and one partially mirror. The three static squeezers are pumped by $\chi_c^{\prime}$, $\chi_c^{\prime \prime}$, and $\chi_c^{\prime \prime}$.
\begin{figure}
\includegraphics[width=0.5\textwidth]{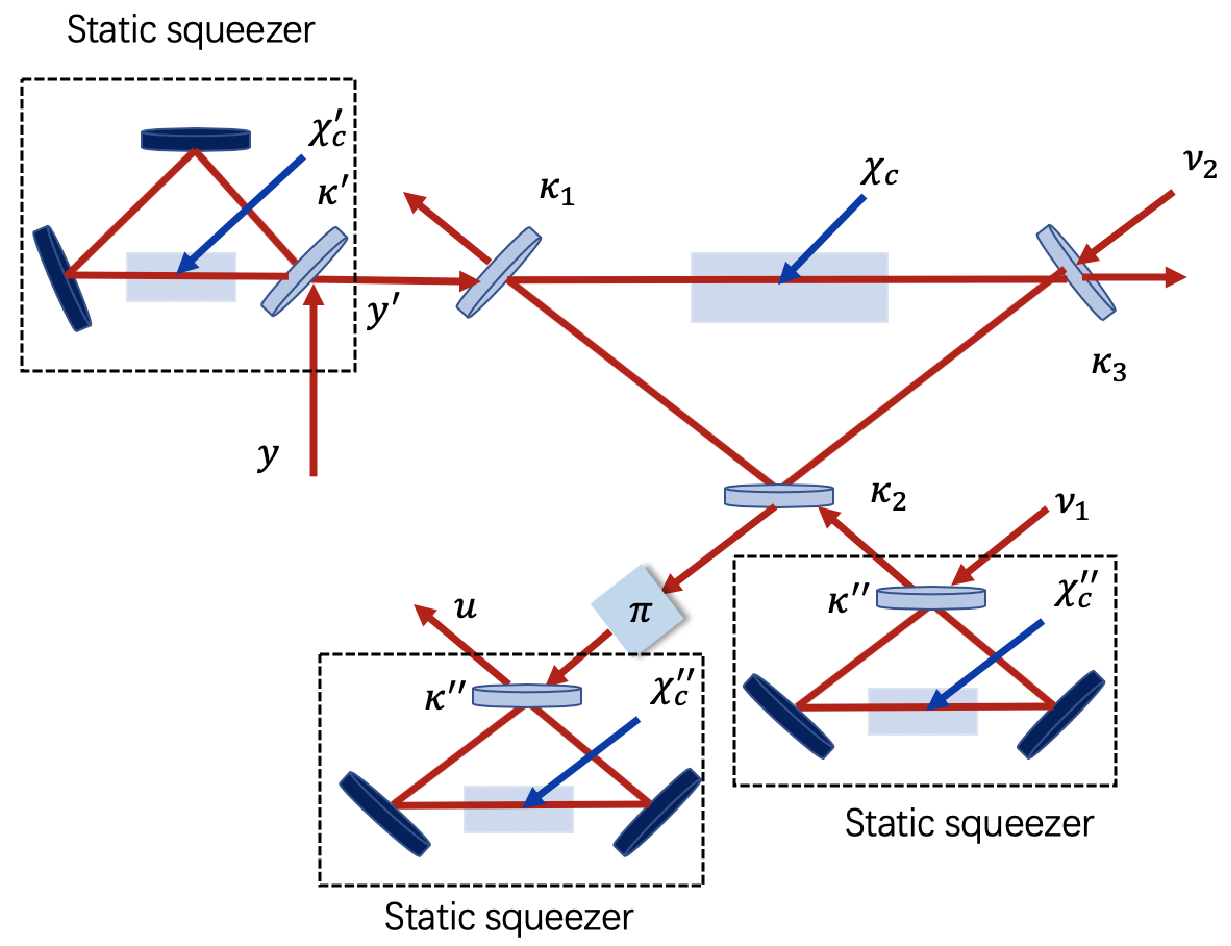}
\caption{The structure of the active quantum controller, which is composed by one OPO, three static squeezers and a phase shifter $\pi$.}
\label{fig:controller2}
\end{figure}
The parameters of this active controller are calculated as follows:
\begin{equation*}
\begin{aligned}
\kappa_1&=0.0006, \kappa_2=3.4260, \kappa_3=0.8979, \chi_c=-0.1055,\\
\kappa^{\prime}&=82.9561, \kappa^{\prime \prime}=5.0784,\\
\chi_c^{\prime}&=\chi_c^{\prime \prime}=0.1055.\\
\end{aligned}
\end{equation*}

\subsection{Comparison of control performance}
\begin{figure}
\includegraphics[width=0.5\textwidth]{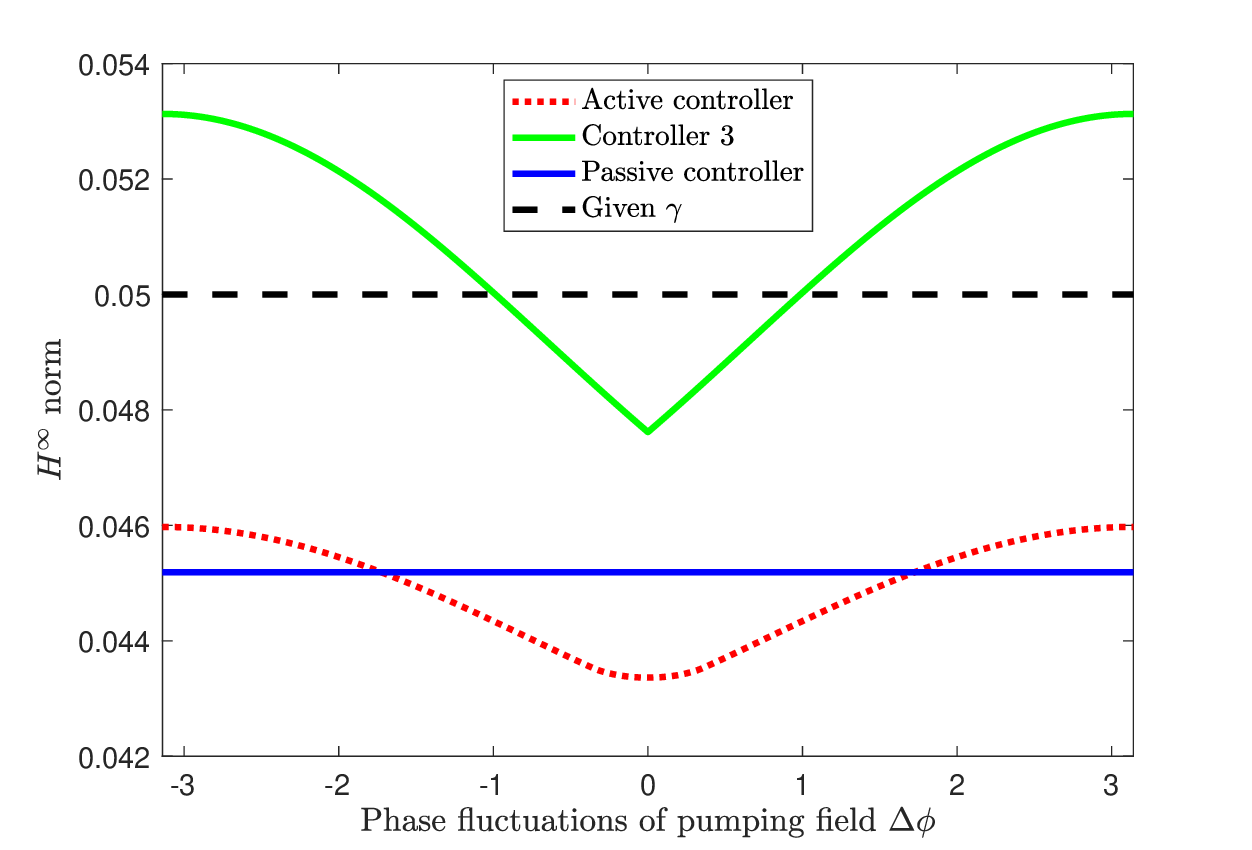}
\caption{Comparison of control performance between three different controllers for a given disturbance attenuation $\gamma=0.05$, $\Delta \phi\in[-\pi,\pi]$, and $\epsilon=1$. \label{fig:performancecomparison}}
\end{figure}

Figure \ref{fig:performancecomparison} shows the $H^\infty$ norm of the transfer function from the disturbance input $\omega(t)$ to the output $z(t)$ for three different controllers. The passive and active controllers are designed in this paper, and Controller 3 is designed without considering uncertainties using the method in \cite{james2008h}.  Fig. \ref{fig:performancecomparison} shows that Controller 3 has good robustness for small phase fluctuations within about $\Delta \phi(t)\in[-1.005,0.9739]$, but is not able to deal with larger fluctuations. Both the passive and active controllers can achieve the required disturbance attenuation $\gamma=0.05$ for all phase fluctuations in $[-\pi, \pi]$, and they also have better performance than Controller 3.

We show that the proposed controllers can achieve the required performance for all $\Delta \phi(t) \in [-\pi, \pi]$ in Fig. \ref{fig:performancecomparison}. It is worth noting that when $\Delta \phi(t)$ is relatively large or small, the performance of the active controller degrades. This may be because the actual bound of $F_2$ in \eqref{eqn:normF2} increases. However, it may be more practical to consider a smaller range of the phase fluctuations in many applications. Fig. \ref{fig:performancecomparison_1} shows the control performance for three controllers with smaller phase fluctuations $\Delta \phi(t) \in [-\frac{\pi}{4}, \frac{\pi}{4}]$. From Fig. \ref{fig:performancecomparison_1}, we can see that different ranges of $\Delta \phi$ will result in different uncertainty bounds $\rho_2$, and then result in different control performance. While the norm of uncertainty  $\rho_1$ is not affected by a smaller range of $\Delta\phi$, and the control performance of the passive controller remains the same. 
\begin{figure}
\includegraphics[width=0.5\textwidth]{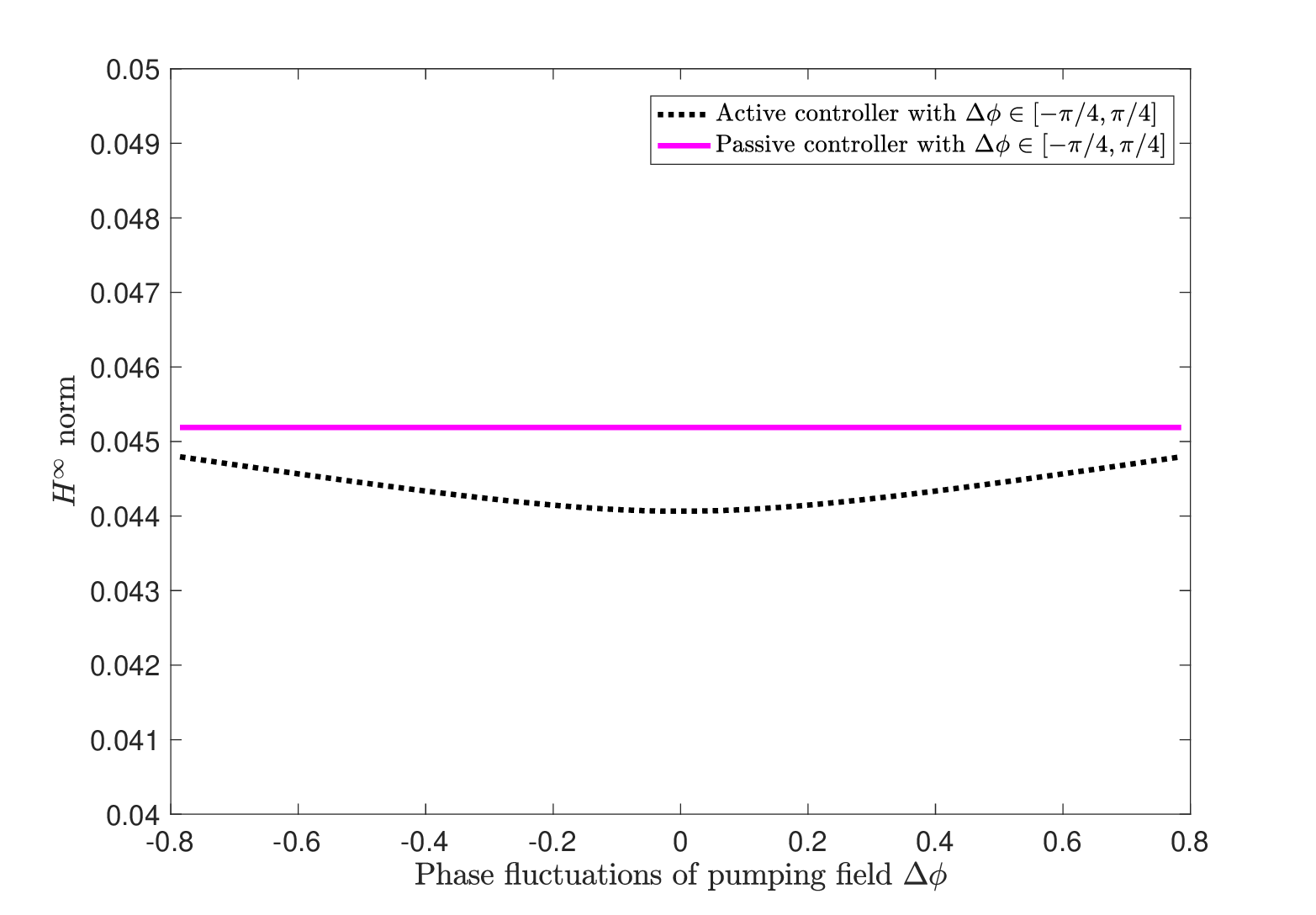}
\caption{Control performance of passive and active controllers between $\Delta \phi\in[-\frac{\pi}{4},\frac{\pi}{4}]$. \label{fig:performancecomparison_1}}
\end{figure}

\section{Fluctuations in the pumping amplitude}
\label{sec5}
In above sections, we have considered fluctuations in the pumping phase. While in some practical applications, there is possibility that the amplitude of the pumping field also suffers from fluctuations. In this section, we show that the proposed two classes of controllers have good robustness with respect to fluctuations in the amplitude of the pumping field. In \cite{liu2020fault}, the authors considered the case that the amplitude of pumping field suffers from a fault signal, and then the system jumps between different modes. Here, we assume the fluctuation of the pumping amplitude is norm-bounded and can be considered as a system uncertainty. We first give the following assumption.
\begin{assumption}
\label{asp:boundofamplitudefluc}
Assume the pumping field can be written as $b=(\beta+\Delta \beta(t))e^{i\Delta \phi(t)}$, where $\Delta \beta(t)$ is the system uncertainty caused by amplitude fluctuations and satisfies $\frac{\Delta \beta(t)}{\beta}\in[0,\beta_{bound}]$. $\beta_{bound}$ represents the bound of ratio between the uncertainty $\Delta \beta(t)$ and the maximum amplitude $\beta$.
\end{assumption}

The fluctuations in the pumping amplitude will affect the norm bounds $\rho_1$ and $\rho_2$ in the design of both a passive controller and an active controller. Hence, we consider the norm-bounded conditions first, and the control procedures of systems with phase fluctuations developed in the previous sections can be directly used for systems with both amplitude and phase fluctuations.
\subsection{Passive quantum controller}
In the design of a passive controller, we have $$\Delta A(t)=\chi\left(1+\frac{\Delta \beta(t)}{\beta}\right) \begin{bmatrix}\cos (\Delta \phi(t))&\sin (\Delta \phi(t))\\\sin (\Delta \phi(t))&-\cos (\Delta \phi(t)) \end{bmatrix}.$$ 
We choose $H_1=\chi\mbox{I}$, $E_1=\mbox{I}$, and 
$$F_1(t)=\left(1+\frac{\Delta \beta(t)}{\beta}\right)\begin{bmatrix}\cos (\Delta \phi(t))&\sin (\Delta \phi(t))\\\sin (\Delta \phi(t))&-\cos (\Delta \phi(t))\end{bmatrix}.$$ 
In this case, $F_1(t)$ satisfies the norm-bounded condition defined by the following inequality:
\begin{equation}
\label{eqn:pboundewithamp}
F_1^T(t)F_1(t)\leq \left(1+\frac{\Delta \beta(t)}{\beta}\right)^2\mbox{I}.
\end{equation} 

Hence, we have the following corollary.
\begin{corollary}
When the bound on the uncertainty $F_1(t)$ satisfies \eqref{eqn:pboundewithamp}, the design procedure for the passive controller and Proposition \ref{existenceconditions_case1} can be directly applied to the case with uncertainties in both amplitude and phase by setting $\rho_1=1+\beta_{bound}$.
\end{corollary}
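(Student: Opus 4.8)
The plan is to show that passing from the phase-only uncertainty of Section \ref{sec3} to the combined amplitude-and-phase uncertainty of Assumption \ref{asp:boundofamplitudefluc} changes nothing in the construction except the numerical value of the uncertainty bound, which the corollary claims equals $\rho_1 = 1+\beta_{bound}$. First I would point out that the nominal matrix $A = \mathrm{diag}(-\kappa/2,-\kappa/2)$ together with the decomposition data $H_1 = \chi\mathrm{I}$ and $E_1 = \mathrm{I}$ are exactly those already fixed in Section \ref{sec3}; only the unknown factor is replaced, from the phase-only $F_1(t)$ to the scaled version $F_1(t) = \bigl(1+\Delta\beta(t)/\beta\bigr)\left[\begin{smallmatrix}\cos(\Delta\phi(t))&\sin(\Delta\phi(t))\\\sin(\Delta\phi(t))&-\cos(\Delta\phi(t))\end{smallmatrix}\right]$. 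Hence $\Delta A(t)$ is still of the admissible form \eqref{eqn:uncertaintydecompositionA}, so Theorem \ref{the:Riccati} applies verbatim and the Riccati equations \eqref{eqn:simRiccati_a}--\eqref{eqn:simRiccati_b} and the controller formulas \eqref{eqn:simcontrolparameters} are unchanged in structure, with $\rho$ now to be identified with the new bound.

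Second, I would verify the norm-bounded condition \eqref{eqn:normboundedcondition} for this $F_1(t)$. A direct computation yields $F_1^T(t)F_1(t) = \bigl(1+\Delta\beta(t)/\beta\bigr)^2\,\mathrm{I}$, which is precisely \eqref{eqn:pboundewithamp}. Since Assumption \ref{asp:boundofamplitudefluc} gives $\Delta\beta(t)/\beta \in [0,\beta_{bound}]$ and the map $s \mapsto (1+s)^2$ is strictly increasing on $[0,\infty)$ (note $1+\Delta\beta(t)/\beta \ge 1 > 0$, so no sign cancellation occurs), the supremum over all admissible fluctuations is attained at $s = \beta_{bound}$, giving $F_1^T(t)F_1(t) \le (1+\beta_{bound})^2\,\mathrm{I}$ for all $t$. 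This is exactly condition \eqref{eqn:normboundedcondition} with $\rho = \rho_1 = 1+\beta_{bound}$.

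Third, I would simply invoke Proposition \ref{existenceconditions_case1}. Its hypotheses and conclusions involve only $A$ (unchanged), the system parameters \eqref{eqn:systemparameters} (unchanged), the coefficient $\chi = \chi^{(2)}\beta$, the attenuation level $\gamma$, the scalar $\epsilon$, and the bound $\rho_1$; none of these has been modified except that $\rho_1$ now takes the value $1+\beta_{bound}$ instead of $1$. Consequently the existence conditions \eqref{con:controlable1_1}--\eqref{con:nopureimaginaryeigen1_2} and the resulting passive-controller design procedure hold verbatim after this single relabelling, which is the assertion of the corollary.

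The only point requiring care — and it is minor here — is the uniformity in $t$ of the bound in \eqref{eqn:pboundewithamp}: one must be sure the worst case is genuinely $\beta_{bound}$ and not some interior value. This is settled by the monotonicity argument above, and is the one place where the one-sided interval $[0,\beta_{bound}]$ (rather than a symmetric interval about zero) matters. Everything else reduces to quoting the already-established Theorem \ref{the:Riccati} and Proposition \ref{existenceconditions_case1} with the substituted constant, so no further computation is needed.
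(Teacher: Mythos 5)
Your proposal is correct and follows the same route the paper takes: keep the decomposition $H_1=\chi\mbox{I}$, $E_1=\mbox{I}$ with the rescaled $F_1(t)$, note $F_1^T(t)F_1(t)=\left(1+\Delta\beta(t)/\beta\right)^2\mbox{I}\leq(1+\beta_{bound})^2\mbox{I}$, and then invoke Theorem \ref{the:Riccati} and Proposition \ref{existenceconditions_case1} with $\rho_1=1+\beta_{bound}$. Your explicit monotonicity remark justifying the worst-case bound is a small but welcome addition to what the paper leaves implicit.
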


\subsection{Active quantum controller}
In the design of an active controller, we obtain
\begin{equation*}
\begin{aligned}
\Delta A(t)=&\chi\left(\left[\begin{smallmatrix}-1+\cos(\Delta \phi(t))&\sin(\Delta \phi(t))\\\sin(\Delta \phi(t))&1-\cos(\Delta \phi(t))\end{smallmatrix}\right]\right.\\
&\left.+\frac{\Delta \beta(t)}{\beta}\left[\begin{smallmatrix}\cos (\Delta \phi(t))&\sin (\Delta \phi(t))\\\sin (\Delta \phi(t))&-\cos (\Delta \phi(t))\end{smallmatrix}\right]\right),
\end{aligned}
\end{equation*}
Let $H_1=\chi\mbox{I}$, $E_1=\mbox{I}$, and 
\begin{equation*}
\begin{aligned}
F_2(t)&=\left[\begin{smallmatrix}-1+\cos(\Delta \phi(t))&\sin(\Delta \phi(t))\\\sin(\Delta \phi(t))&1-\cos(\Delta \phi(t))\end{smallmatrix}\right]\\
&+\frac{\Delta \beta(t)}{\beta}\left[\begin{smallmatrix}\cos (\Delta \phi(t))&\sin (\Delta \phi(t))\\\sin (\Delta \phi(t))&-\cos (\Delta \phi(t))\end{smallmatrix}\right].
\end{aligned}
\end{equation*}
The uncertainty $F_2(t)$ satisfies the norm-bounded condition by the following equation:
\begin{equation}
\label{eqn:aboundewithamp}
F_2^T(t)F_2(t)\leq\left(2+\frac{\Delta \beta(t)}{\beta}\right)^2\mbox{I}.
\end{equation}
Similar to the passive case, we obtain the following corollary.

\begin{corollary}
When the bound on the uncertainty $F_2(t)$ satisfies \eqref{eqn:aboundewithamp}, the design procedure for the active controller and Proposition \ref{existenceconditions_case2} in Section \ref{sec4} can be applied here by setting $\rho_2=2+\beta_{bound}$.
\end{corollary}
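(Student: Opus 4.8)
The plan is to reduce the corollary to a single matrix--norm estimate. The decomposition data $H_1=\chi\mbox{I}$ and $E_1=\mbox{I}$ are exactly those used in Section \ref{sec4}, and the nominal state matrix $A=\mathrm{diag}(-\tfrac{\kappa}{2},-\tfrac{\kappa}{2})+\chi\,\mathrm{diag}(1,-1)$ is unchanged by the amplitude fluctuation; the only object the fluctuation modifies is the uncertainty factor $F_2(t)$ together with the scalar $\rho_2$ appearing in $F_2^T(t)F_2(t)\leq\rho_2^2\mbox{I}$. So once that inequality is re-established with $\rho_2=2+\beta_{bound}$, the uncertainty remains of the admissible form \eqref{eqn:uncertaintydecompositionA}--\eqref{eqn:normboundedcondition}, Theorem \ref{the:Riccati} applies with $\rho=\rho_2$, and Proposition \ref{existenceconditions_case2}—whose conditions \eqref{con:controlable2_1}--\eqref{con:nopureimaginaryeigen2_2} involve the data only through $\kappa_1,\kappa_2,\gamma,\chi$ and the uncertainty bound—transfers verbatim upon substituting $\rho_2=2+\beta_{bound}$.

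To obtain the estimate I would split $F_2(t)=M_1(t)+\tfrac{\Delta\beta(t)}{\beta}M_2(t)$, where $M_1(t)=\left[\begin{smallmatrix}-1+\cos\Delta\phi(t)&\sin\Delta\phi(t)\\\sin\Delta\phi(t)&1-\cos\Delta\phi(t)\end{smallmatrix}\right]$ is the phase-only active-decomposition factor from \eqref{eqn:uncertainty_2}, and $M_2(t)=\left[\begin{smallmatrix}\cos\Delta\phi(t)&\sin\Delta\phi(t)\\\sin\Delta\phi(t)&-\cos\Delta\phi(t)\end{smallmatrix}\right]$ is the passive-decomposition factor. A direct computation gives $M_1^T(t)M_1(t)=(2-2\cos\Delta\phi(t))\mbox{I}$ and $M_2^T(t)M_2(t)=\mbox{I}$, so the spectral norms satisfy $\|M_1(t)\|=\sqrt{2-2\cos\Delta\phi(t)}\leq 2$ and $\|M_2(t)\|=1$ for all $t$. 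By the triangle inequality and Assumption \ref{asp:boundofamplitudefluc}, $\|F_2(t)\|\leq\|M_1(t)\|+\tfrac{\Delta\beta(t)}{\beta}\|M_2(t)\|\leq 2+\tfrac{\Delta\beta(t)}{\beta}\leq 2+\beta_{bound}$; since $F_2^T(t)F_2(t)\leq\|F_2(t)\|^2\mbox{I}$, this yields both the claimed inequality \eqref{eqn:aboundewithamp} and the norm-bounded condition with $\rho=\rho_2=2+\beta_{bound}$.

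I do not expect a serious obstacle; the statement is essentially bookkeeping once the above splitting is in place. The one point deserving a little care is sharpness: one should confirm that $2+\beta_{bound}$ is the smallest constant valid uniformly in $t$, which it is, because the triangle-inequality bound is attained at $\Delta\phi(t)=\pi$, where $F_2(t)=\mathrm{diag}\!\left(-(2+\tfrac{\Delta\beta}{\beta}),\,2+\tfrac{\Delta\beta}{\beta}\right)$. A secondary point is to note that replacing the exact, time-dependent matrix $F_2^T(t)F_2(t)$ by the constant majorant $\rho_2^2\mbox{I}$ costs nothing for our purposes, since Proposition \ref{existenceconditions_case2} is phrased in terms of a constant uncertainty bound, so the Riccati-equation existence conditions carry over directly with $\rho_2=2+\beta_{bound}$.
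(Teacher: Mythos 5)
Your proposal is correct and follows essentially the route the paper intends: the paper simply asserts the bound \eqref{eqn:aboundewithamp} and then invokes Proposition \ref{existenceconditions_case2} with $\rho_2=2+\beta_{bound}$, and your splitting $F_2(t)=M_1(t)+\tfrac{\Delta\beta(t)}{\beta}M_2(t)$ with $M_1^TM_1=(2-2\cos\Delta\phi(t))\mbox{I}$, $M_2^TM_2=\mbox{I}$ and the triangle inequality is just the omitted computation behind that assertion (and in fact corrects the typo $2-\cos(\Delta\phi(t))$ in \eqref{eqn:normF2}). The sharpness remark at $\Delta\phi(t)=\pi$ is a harmless extra not required by the corollary.
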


In practical applications, we may assume e.g., $\beta_{bound}\leq5\%$. The passive and active controllers can be designed in a similar way to the case with only phase fluctuations. We omit the numerical calculations of the Riccati equations and controller parameters, and only show simulation results of $H^\infty$ norm. The results are presented in Fig. \ref{fig:performancecomparisonwithamplityde}, which shows a similar result to that in Fig. \ref{fig:performancecomparison}. The curves show some fluctuations caused by the amplitude fluctuations. In the simulation result, we take $\frac{\Delta\beta(t)}{\beta}$ as a random value in $[0,5\%]$ with each $\Delta\phi$. The two controllers designed in this paper achieve the given disturbance attenuation $\gamma=0.05$.
\begin{figure}
\includegraphics[width=0.5\textwidth]{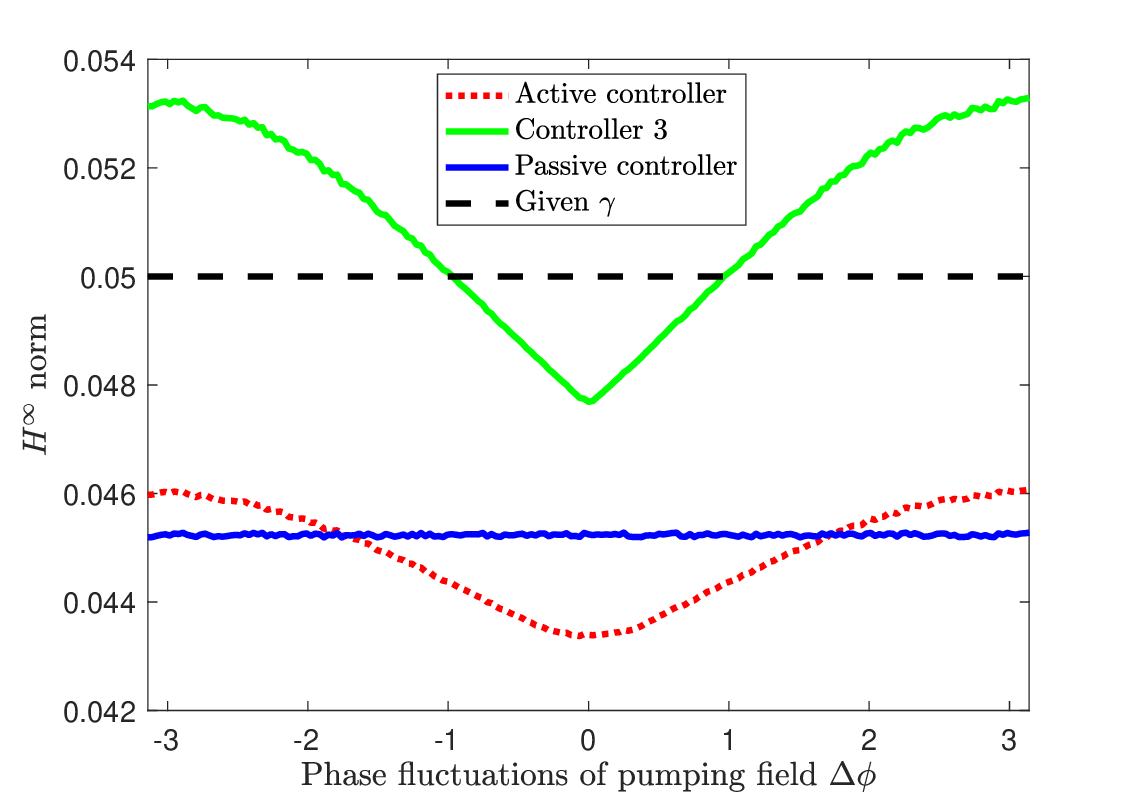}
\caption{Comparison of control performance between three different controllers of the system with fluctuations both in amplitude and phase, for a given disturbance attenuation $\gamma=0.05$, $\Delta \phi\in[-\pi,\pi]$, $\frac{\Delta \beta(t)}{\beta(t)}\in[0, \beta_{bound}]$, $\beta_{bound}=5\%$, and $\epsilon=1$. \label{fig:performancecomparisonwithamplityde}}
\end{figure}

\section{Conclusions}
\label{sec:conclusion}
In this paper, we considered robust $H^\infty$ control design for an OPO composed of a cavity and a nonlinear crystal, where the phase of its pumping field is subject to fluctuations due to a fault signal. The controller was designed by solving two Riccati equations, and the physical stabilisation of the designed controller was theoretically discussed. A passive controller was first designed, which has a simple structure, to achieve the given control performance. The passive controller was implemented by an empty cavity that is composed of three mirrors. Based on another decomposition of the system uncertainties caused by phase fluctuations, an active controller was then designed to achieve the required control performance. The implementation of the active controller is more complex, and it is composed by static squeezers and an OPO, while it has better performance than the passive controller. This paper also compared the control performance of the proposed two controllers and the controller designed without consideration of the system uncertainties. The results show that the proposed two controllers achieve better performance with respect to the $H^\infty$ norm. We also illustrated that the two controllers proposed have good robustness to amplitude fluctuations in the pumping field. Future work may include performance improvement of output squeezing for OPOs using the proposed approach, establishing sufficient and necessary conditions for the existence of solutions to the two Riccati equations for a more general case and a comprehensive performance comparison between passive controllers and active controllers.



\bibliographystyle{apacite}  
\bibliography{Automatica_fault_tolerant}

\end{document}